\newcommand{\N}{\mathbb{N}}
\newcommand{\R}{\mathbb{R}}
\newcommand{\A}{\mathcal{A}}
\newcommand{\D}{\mathcal{D}}
\newcommand{\Q}{\mathcal{Q}}
 \newcommand{\sgn}{\operatorname{sign}}
\newtheorem{thm}{Theorem}[]
\newtheorem{prop}[thm]{Proposition}
\newtheorem{cor}[thm]{Corollary}
\theoremstyle{remark}
\newtheorem{rem}{Remark}[thm]
\theoremstyle{definition}
\newtheorem{defn}[thm]{Definition}
\newcommand{\sd}{\Sigma\Delta}
\newcommand{\supp}{\text{supp}}
\newcommand{\gt}{\tilde{\gamma}}
\theoremstyle{definition}
\newtheorem*{rem*}{Remark}
\providecommand{\fr}{\frac}
\newlength{\figurewidth}
\newlength{\smallfigurewidth}
\newcounter{cte}
\title{Quantization of compressive samples with stable and robust recovery}
\author{%
Rayan Saab$^{\ast}$, Rongrong Wang$^{\dag}$, and {\"O}zg{\"u}r Y{\i}lmaz$^{\dag}$\\[0.5em]
{\small\begin{minipage}{\linewidth}\begin{center}
\begin{tabular}{ccc}
$^{\ast}$The University of California, San Diego & \hspace*{0.5in} & $^{\dag}$The University of British Columbia \\
9500 Gilman Dr
 && 121-1984 Mathematics Rd \\
La Jolla, CA 92093, USA && Vancouver, BC V6T 1Z2, Canada\\
\url{rsaab@ucsd.edu} && \url{rongwang@math.ubc.ca}\\ &&\url{oyilmaz@math.ubc.ca}
\end{tabular}
\end{center}\end{minipage}}
}
\begin{document}
\maketitle

\thispagestyle{empty}
\begin{abstract}
  In this paper we study the quantization stage that is implicit in any
  compressed sensing signal acquisition paradigm.  We propose using
  Sigma-Delta ($\Sigma\Delta$) quantization and a subsequent
  reconstruction scheme based on convex optimization. We prove that
  the reconstruction error due to quantization decays polynomially in
  the number of measurements. Our results apply to arbitrary signals,
  including compressible ones, and account for measurement noise.
  Additionally, they hold for sub-Gaussian (including Gaussian and
  Bernoulli) random compressed sensing measurements, as well as for
  both high bit-depth and coarse quantizers, and they extend to
  $1$-bit quantization. In the noise-free case, when the signal is
  strictly sparse we prove that by optimizing the order of the
  quantization scheme one can obtain root-exponential decay in the
  reconstruction error due to quantization.

\end{abstract}
%
%
\section{Introduction}
\label{sec:intro}
Compressed sensing (CS)  \cite{CRT05,Donoho2006_CS} is a signal acquisition and reconstruction paradigm that is based on exploiting a fundamental empirical observation: various types of signals admit (approximately) sparse representations with respect to certain bases or frames.   
For compressive \emph{acquisition} of a signal $x \in \R^N$,  one collects relatively few, say $m\ll N$, inner products of the signal with well-chosen \emph{measurement vectors} such as appropriate random vectors, say $\phi_i \in \R^N$, $i \in \{1,...,m\}$. These inner products yield $m$ compressive measurements $y_i = \langle \phi_i, x\rangle + \eta_i$ , with $i \in \{1,...,m\}$ and $m\ll N$. Here $\eta_i$ represents measurement noise.  To \emph{reconstruct} a good approximation to $x$ from these compressive measurements, one uses some sophisticated non-linear decoder, typically based on convex optimization or on greedy methods. Perhaps the most popular reconstruction method uses $\ell_1$-norm minimization. Specifically, $x$ is approximated with $\hat{x}$, the solution to  
\begin{equation} \min_{z\in\R^N} \|z\|_1 \text{ subject to } \|\Phi z-y\|_2 \leq \epsilon. \label{eq:BPDN}   
\end{equation} 
Here $\Phi$ is an $m\times N$ matrix with $\phi_i$ as its $i$th row, $y = (y_i)_{i=1}^m$ is the vector of acquired measurements,  and $\epsilon$ is an upper bound on the $\ell_2$-norm of the noise-vector $\eta=(\eta_i)_{i=1}^m$. 

Vectors that are $k$-sparse have at most $k$ non-zero entries, i.e., they belong to the set $\Sigma_k^N:=\{x\in\R^N, |\supp{(x)}| \leq k \}$.
 Compressed sensing results are typically formulated in terms ${\sigma_k}(x)_{\ell_1}:=\min\limits_{v\in\Sigma_k^N}  \|x-v\|_1$, called the \emph{best $k$-term approximation error of $x$ in $\ell_1$}, which is a function that measures how close $x$ is to being $k$-sparse. It has been shown, e.g., \cite{CRT05,Donoho2006_CS}, that for a wide class of random matrices (including those with i.i.d. \emph{standard} Gaussian or $\pm 1$ Bernoulli entries), with high probability on the draw of the matrix and uniformly on $x$, the solution $\hat{x}$ to \eqref{eq:BPDN} satisfies\footnote{Since we consider CS matrices with unit-variance entries as opposed to those with unit expected-norm columns, the right-hand side of \eqref{eq:BPDN_thm} contains a factor of $\frac{1}{\sqrt{m}}$ affecting the noise term.} 
 \begin{equation}\|x-\hat{x}\|_2 \leq \Cl{c1} \left( \frac{\epsilon}{
       \sqrt{m}} + \frac{{\sigma_k}{(x)_\ell}_1}{\sqrt{k}}
   \right), \label{eq:BPDN_thm}\end{equation} when $m \geq \Cl{c2} k
 \log{(N/k)}$. Accordingly, any $k$-sparse $x\in \R^N$ can be exactly
 recovered from $m\ll N$ measurements. Furthermore, the solution
 $\hat{x}$ of \eqref{eq:BPDN} provides an approximation of $x$ that is
 within the level of measurement noise (i.e., robust to noise) and
 almost as good as the best $k$-term approximation of $x$ in $\ell_1$
 (i.e., stable with respect to model mismatch). The above is a concise
 description of some of the foundational results in compressed
 sensing. Note that the compressed sensing literature has grown
 extensively in recent years. For example, it includes work on expanding the
 classes of random matrices from which $\Phi$ can be drawn (e.g.,
 \cite{krahmer2014suprema, rauhut2012restricted,
   pfander2013restricted,bajwa2007toeplitz}), and on devising highly
 efficient reconstruction algorithms $\Delta$ so that $\Delta(\Phi x)
 = x $ whenever $x\in{\Sigma}_k^N$ and so that $\|\Delta(\Phi x + e) -
 x \|_2$ is small otherwise (e.g.,
 \cite{needell2009cosamp,dai2009subspace, blumensath2009iterative,
   donoho2009message, SY2010}).  It also includes work on incorporating
 additional structural assumptions or prior knowledge on the underlying signals to
 improve reconstruction accuracy or decrease the number of
 measurements (e.g., \cite{baraniuk2010model,FMSY2012,SM2014}).

\subsection{The quantization problem in compressed sensing}

\emph{Compressed} sensing was named so mainly because of its potential
to combine sensing (acquisition) with compression (source coding) --
see, e.g., the discussion in Section I of
\cite{Donoho2006_CS}. However, most of the literature, as we briefly reviewed
above, ignored analog-to-digital (A/D) conversion or
quantization. Accordingly, compressed sensing emerged mostly as a dimension
reduction technique rather than a scheme for compression. 


Here, we focus on a compressive data acquisition model consisting of a
measurement stage that we have outlined above, followed by a
quantization stage where the measurements are replaced with finite bit
streams, and a reconstruction stage where the signal of interest is
approximated from the quantized measurements. The set of signals of
interest contains all \emph{bounded $k$-sparse signals} in $\R^N$
where $k\ll N$ and is denoted by $\Sigma_K^{N,\mu}:=\{x: \ x\in
\Sigma_k^N, \
\|x\|_2<\mu\}$. 
We are also interested in \emph{bounded compressible signals}, i.e.,
signals that can be well approximated in $\Sigma_K^{N,\mu}$. We will
denote the union of these sets by $\mathcal{X}.$
Let $x\in \mathcal{X}$ and let $y=\Phi x$ be the compressive samples of $x$, where $\Phi$ denotes an $m\times N$ compressive measurement matrix. We define a \emph{quantization map} $\mathcal{Q}$ via its action $(\mathcal{Q} y)_i = q_i,$ where $q_i\in \mathcal{A}$, and $\A$ is a finite set called the quantization alphabet.  Thus, the elements of $\A$ can be assigned finite-length binary labels, amenable to digital storage, and one example is the ``1-bit" alphabet $\A=\{-1,1\}$. Given the quantized compressive measurements $\Q(\Phi x)$, we want to recover a good approximation $\hat{x}$ of $x$ via some reconstruction map $\Delta_\Q: \A^m \mapsto \R^N$ with $\hat{x}=\Delta_\Q( \Q(\Phi x)) $ that ensures that the reconstruction error  $\|x-\hat{x}\|$ is small. 

\subsubsection{Entropy numbers and optimal reconstruction error}
To measure the performance of a practical scheme consisting of the three stages discussed above  (i.e., measurement, quantization, and reconstruction) we will examine the tradeoff between rate (total number of bits) and distortion (reconstruction error).  Ideally, one seeks schemes that approach the optimal error decay for the class of signals at hand, say $\mathcal{X}=\Sigma_k^{N,1}$. In fact, the optimal error is induced by the covering number  $\mathcal{N}(\mathcal{X}, \| \cdot \|, \epsilon)$, defined as  
the minimal number of balls, centered in $\mathcal{X}$, and of radius $\epsilon$ (measured in the norm $\|\cdot\|$) whose union contains $\mathcal{X}$. If we were to assign each of the centers a binary label, this would be an optimal encoding of $\mathcal{X}$, yielding an approximation error of $\epsilon$ while using $\mathcal{R}^*(\epsilon)=\log_2 \big(\mathcal{N} ( \mathcal{X}, \| \cdot \|, \epsilon)\big) $ bits (see, e.g.,  \cite{lorentz1996constructive}). Inverting this function we obtain $\epsilon^*(\mathcal{R})$, the optimal error as a function of the number of bits. For example, for $\mathcal{X}=\Sigma_k^{N,1}$ a volume argument reveals the bound ${\epsilon^*}(\mathcal{R})\gtrsim \frac{N}{k} 2^{-\mathcal{R}/k}$.
Moreover, this bound is essentially achievable for large enough $\mathcal{R}$, by simply using $\log_2 { {N}\choose{k}}$ bits to encode the location of the non-zero coefficients of $x\in\mathcal{X}$ and using the remaining bits to directly encode the non-zero entries. We note that such an approach is not available to us, as we only have access to the vector of measurements $\Phi x$ -- and not to $x$ itself. Moreover,  $\Phi x$ must be quantized upon acquisition using analog hardware. Such hardware cannot invert the measurement system to recover $x$ (otherwise there would be no need for quantization). Instead, one seeks quantization schemes that act on $\Phi x$, and reconstruction algorithms to obtain an estimate $\hat{x}$ from $\mathcal{Q}(\Phi x)$. Below, we summarize some of the quantization and reconstruction strategies that have been proposed in the context of compressed sensing. 

\subsubsection{Memoryless scalar quantization with standard reconstruction techniques} Perhaps the most intuitive approach to quantization is the so-called \emph{memoryless scalar quantization} (MSQ). Here one replaces each measurement $y_i$ with the element of the quantization alphabet that best approximates it. Formally, a \emph{scalar quantizater} $Q_\A: \R \to \A$ is defined via
\begin{align}\label{eq:scalar}Q_\A(z)\in\min\limits_{v\in\A}|v-z|.\end{align}
Suppose that we take $m$ measurements of signals in $\mathcal{X}$, MSQ maps the vector of measurements to $\A^m$ by acting independently on each entry:
\begin{align}\label{eq:MSQ}
\mathcal{Q}_{\A}^{MSQ}: \quad  \Phi\mathcal{(X)} &\to \A^{m} \\\nonumber
				 y &\mapsto q, \textrm{ where } q_i  = Q_\A(y_i), \quad  i= 1,...,m.
\end{align}
While attractive for its simplicity and widely assumed in the mathematical treatment of signal acquisition methods, MSQ is generally not optimal when one \emph{oversamples}, i.e., one has more measurements than needed for $\Phi$ to be injective \cite{PSY13}. 
This sub-optimality is because MSQ ignores correlations between measurements (as it acts component-wise). Still, at first glance this sub-optimality is well-tolerated in the compressive sampling framework because this framework does not rely on oversampling. In fact, MSQ yields near-optimal rate-distortion guarantees when one is permitted to take a minimal number of CS measurements and to reduce the reconstruction error (i.e., the distortion) by increasing the size of the quantization alphabet \cite{candes2006encoding}. 

On the other hand, such an approach may not be desirable, or even practical, in applications. It does not permit improving the accuracy of the reconstruction once the acquisition hardware, thus the quantization alphabet, is fixed.  This is observed in the approximation error bound \eqref{eq:BPDN_thm}, which in a quantization setting does not guarantee that the error decays as more measurements are taken. For example, consider the scalar quantization setting where every measurement $y_i$ is replaced by the closest element $q_i $ from  \[\A_L^\delta:=\left\{\pm (2j-1)\delta/2, j\in\{1,...,L\}\right\},\] the so-called \emph{$2L$-level midrise alphabet with step size $\delta$}. Here, when $y_i$ is appropriately bounded, we have $|y_i-q_i|\leq \delta/2$ and so one must take $\epsilon = \sqrt{m}\delta/2$ in \eqref{eq:BPDN} and in the error estimate  \eqref{eq:BPDN_thm}. This shows that using simple MSQ coupled with standard reconstruction algorithms only guarantees a reconstruction error that scales with $\delta$, the step-size in the quantization alphabet. It does not guarantee any error decay with $m$. Consequently, in this set-up our only resource for obtaining small reconstruction error (due to quantization) is to use a small quantization step-size $\delta$, thus a large alphabet. Importantly, this typically implies a high implementation complexity in hardware. 

\subsubsection{Modifications to MSQ, and alternative reconstruction techniques} 
To overcome this shortcoming of MSQ, one may devise more sophisticated reconstruction algorithms to improve the error behavior as the number of measurement increases. For example, this idea was explored in \cite{Jacques2010, Jacques2013} where the $\ell_2$ constraint of \eqref{eq:BPDN} is replaced by an $\ell_p$ constraint with $p>2$. By choosing $p$ optimally, \cite{Jacques2010} shows that the reconstruction error decays as $(\log m)^{-1/2}$ as $m$ increases. A different reconstruction algorithm based on message passing was proposed in \cite{kamilov2012message} with an error that is empirically observed to decay like $1/m$. In fact, lower bounds obtained in the frame quantization setting \cite{GVT98} show that such an error rate is optimal for MSQ. The error associated with this quantization technique can \emph{at best} decay linearly in the number of measurements, regardless of the reconstruction algorithm. This is far from optimal, as \emph{the optimal encoding (over all quantization methods) of sparse signals yields an error that decays exponentially in $m$} (as we discussed above). Note that one can essentially match the optimal bound by introducing dither and modifying the scalar quantizer so that non-contiguous intervals map to the same element of $\mathcal{A}$ \cite{B12}. However, this does not overcome the increased hardware complexity associated with a large alphabet, as the hardware must still distinguish between small intervals in order to map them to the appropriate element of $\mathcal{A}$. Additionally, this approach has theoretical guarantees only for a combinatorially complex, i.e., impractical, decoder.

Alternatively, one may consider improving the rate-distortion tradeoff associated with MSQ by efficiently encoding its resulting bit-stream. This entails finding a simple map from $\mathcal{A}^m$ to a smaller codebook $\mathcal{C}$, with $\log{|\mathcal{C}|}\ll m\log{|\mathcal{A}|}$, and a reconstruction algorithm that guarantees sufficient error decay with $m$. However, we do not know of any such encoding/reconstruction schemes with theoretical guarantees on the decay of the worst-case error.

\subsubsection{One-bit MSQ and its variations}
Despite the shortcomings of MSQ, its extreme case of one-bit quantization has been the focus of several contributions. Here, one simply replaces every measurement $y_i$ by $\sgn{(y_i)}$ and as a consequence loses all information about the magnitude of $y$, hence $x$. Nevertheless, the focus in such a set-up is approximating $x$ up to the lost magnitude information. The first results in this direction \cite{boufounos20081, jacques2013robust} derived lower bounds on the achievable reconstruction accuracy and proposed some heuristic algorithms for estimating $x$. Afterwards, \cite{PV13} proposed estimating $\frac{x}{\|x\|_2}$ as the minimizer of a particular convex optimization problem and showed error decay rates of $\big(\frac{k}{m}\big)^{1/5}$. To remedy the loss of magnitude information, \cite{knudson2014one} recently proposed quantizing $y_i$ to $\sgn{(y_i + \tau_i)}$ for a fixed vector of thresholds $\tau$ and presented associated reconstruction methods. The corresponding reconstruction error decay rate associated with approximating $x$, namely $\big(\frac{k}{m}\big)^{1/5}$, was also derived in \cite{knudson2014one}. 

In another recent work, \cite{baraniuk2014exponential} proposed  quantization schemes that, like  the $\sd$ schemes we focus on, 
adaptively update a quantization threshold as the measurements are taken. That is, in the one-bit case they propose quantizing $y_i$ to $\sgn{(y_i + \tau_i)}$ albeit now the thresholds $\tau_i$ are adaptively chosen. These schemes yield
exponential error decay in $m$, but unlike $\sd$ schemes, 
 require very sophisticated computation to assign
$q$. For each update of the thresholds,  an $\ell_1$-norm optimization problem is solved (or an
iterative hard-thresholding algorithm is run). Furthermore, the scheme entails communicating analog quantities (the current
thresholds) to a digital computer (to solve the optimization problem). It also entails halting the measurement
process until the optimization ends.

\subsubsection{Noise-shaping quantization methods for compressed sensing}

Recently, noise-shaping quantizers \cite{CGKSY15} have been
successfully employed in the compressive sensing
framework. Traditionally such quantizers, for example $\sd$ quantizers,
are used for analog-to-digital conversion of bandlimited signals and
they are designed to push (most of) the quantization error outside of
the signal spectrum (see, e.g., \cite{NST96}). Over the last decade it
has been established that noise shaping quantizers, including $\sd$
quantizers as well as a new family of quantizers based on beta
encoders \cite{CG15} cf. \cite{CGKSY15} (not our focus in
this paper) can be used in the more general setting of linear
sampling systems associated with both frames and compressed sensing.

We remark that $\sd$ quantization, like MSQ, is universal, i.e., it does not require or use information about the vectors $\phi_i$ or $x$. Furthermore, it is progressive, so it continues to operate in the same way as new measurements arrive. In the case of frames, $\sd$ schemes ensure that the quantization
error is close to the kernel of an appropriate reconstruction operator
based on noise-shaping alternative dual frames called Sobolev duals
\cite{blum:sdf, lammers:adf}, see also \cite{GLPSY13, KSY13,
  KSW12}. $\sd$ quantization schemes can also be used in the case of
compressed sensing. Work preceding this paper mainly uses a two-stage
reconstruction scheme \cite{GLPSY13}, see also \cite{KSY13, Feng2014},
where in the first stage one estimates the support of the original
sparse signal and in the second stage, one uses an appropriate Sobolev
dual to reconstruct.

Above, we have provided an overview of various quantization and reconstruction
approaches associated with compressed sensing. These approaches differ in their implementation 
complexity, in their reconstruction algorithms, and in the quality of their reconstruction guarantees. We
refer the reader to \cite{BJKS14} for a more comprehensive review of
the literature on quantization in the compressed sensing context. In what follows, we focus on using $\sd$ schemes of order $r$ to quantize
compressive samples of sparse or compressible signals. The $\sd$
schemes we study are low-complexity and do not
place much computational burden on the quantizer. We dedicate
Section \ref{sec:SD_existing} to reviewing the prior work on
$\sd$-quantization in the compressed sensing context.

\subsection{Contributions}

We show that $\sd$ schemes allow accurate reconstruction from
quantized CS measurements via convex optimization. Our work
generalizes in several ways the results of \cite{GLPSY13, KSY13}, which applied to
sparse signals that satisfy a size condition on their smallest
non-zero entries.  In particular, the convex optimization problem we
propose allows us to remove the size condition and to obtain results
that apply for general signals in noisy scenarios. It also allows us to use small, even $1$-bit, quantization  alphabets. If $q$ denotes the
$\sd$ quantization of the noisy compressed sensing measurements $\Phi
x + \eta$ where $\eta$ satisfies $\|\eta\|_\infty \leq \epsilon$, then
we recover $x$ from $q$ by solving
\begin{align}\label{eq:decoder_intro}
(\hat{x},\hat{\nu}) :=  \arg\min\limits_{(z,\nu)}\|z\|_1 \  \text{ subject to } &  \|D^{-r}(\Phi z+\nu-q )\|_2 \leq \gamma(r)\sqrt{m}  \notag \\
\text{\ \ and\ \ } & \|\nu\|_2\leq \epsilon \sqrt m.
\end{align} 
Here, $D$ denotes the $m\times m$ difference matrix with entries $D_{i,i}=1$, $D_{i+1,i}=-1$ and $D_{i,j}=0$ when $j\notin \{i, i-1\}$. Additionally, $r$ denotes the order of the $\sd$ scheme, and $\gamma(r)$ is a constant associated with the $\sd$ scheme, see Section \ref{sec:SD_existing}. Our main result follows.
\begin{thm}\label{thm:main_intro} Let $k,\ell,m,N$ be integers satisfying $m\geq \ell \geq ck\log(N/k)$ and let $\Phi$ be an $m\times N$ 
sub-Gaussian measurement matrix. With high probability on the draw of $\Phi$, for all $x\in\R^N$, the solution $\hat{x}$ of \eqref{eq:decoder_intro} where $q$ is the $\sd$ quantization of $\Phi x + \omega$ (with $\|\omega\|_\infty\leq \epsilon$) satisfies
\begin{equation}\label{eq:l2err_intro}
\|\hat{x}-x\|_2 \leq C\left(\big(\frac{m}{\ell}\big)^{-r+1/2}\delta+\frac{\sigma_k(x)}{\sqrt k}+\sqrt{\frac{m}{\ell}}\epsilon \right), 
\end{equation}
where $c$, $C$ are constants that do not depend on $m, \ell,N$. 
\end{thm}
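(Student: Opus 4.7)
The plan is to reduce the analysis to a standard RIP-based $\ell_1$-recovery argument applied to a low-dimensional ``low-frequency'' compression of the measurement matrix that takes advantage of the noise-shaping structure of the $r$-th order $\sd$ scheme. First, the defining recursion of the $\sd$ scheme produces an internal state $u\in\R^m$ with $\|u\|_\infty\leq\gamma(r)$ (and hence $\|u\|_2\leq\gamma(r)\sqrt{m}$) satisfying $\Phi x+\omega-q = D^r u$. This shows that $(x,\omega)$ is feasible in \eqref{eq:decoder_intro}; in particular, $\|\hat{x}\|_1\leq\|x\|_1$. Feasibility of the optimizer provides a companion $\hat{u}$ with $\|\hat{u}\|_2\leq\gamma(r)\sqrt{m}$ and $\Phi\hat{x}+\hat{\nu}-q=D^r\hat{u}$, and subtracting yields the key error identity
\[
\Phi(\hat{x}-x) \;=\; D^r(\hat{u}-u) + (\omega-\hat{\nu}).
\]

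Second, I compress this identity to an $\ell$-dimensional subspace on which $D^r$ contracts. Letting $D^r=U\Sigma V^*$ be an SVD, let $U_\ell\in\R^{m\times\ell}$ collect the left singular vectors of $D^r$ associated with its $\ell$ smallest singular values, and set $\Phi':=\tfrac{1}{\sqrt{\ell}}U_\ell^*\Phi$. Applying $\tfrac{1}{\sqrt{\ell}}U_\ell^*$ to the identity, I bound the two resulting terms using (i) the spectral estimate $\sigma_{m-\ell+1}(D^r)\lesssim(\ell/m)^r$, standard for the discrete first-difference operator, and (ii) the fact that $U_\ell^*$ is a contraction. The first term is thus at most $\tfrac{1}{\sqrt{\ell}}(\ell/m)^r\cdot 2\gamma(r)\sqrt{m}\lesssim(m/\ell)^{-r+1/2}\delta$, and the second, using $\|\omega\|_2,\|\hat{\nu}\|_2\leq\epsilon\sqrt{m}$, is at most $\tfrac{2\epsilon\sqrt{m}}{\sqrt{\ell}}=2\sqrt{m/\ell}\,\epsilon$. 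Combining,
\[
\|\Phi'(\hat{x}-x)\|_2 \;\lesssim\; (m/\ell)^{-r+1/2}\delta + \sqrt{m/\ell}\,\epsilon.
\]

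Third, I invoke the key technical lemma: for sub-Gaussian $\Phi$ and $\ell\geq ck\log(N/k)$, with high probability on the draw of $\Phi$ the matrix $\Phi'$ satisfies RIP of order $2k$ with constant below the threshold required for robust $\ell_1$-recovery. For Gaussian $\Phi$ this is immediate by rotational invariance (the rows of $U_\ell^*\Phi$ remain i.i.d.\ standard Gaussian vectors in $\R^N$); the general sub-Gaussian case requires a Hanson--Wright-style concentration bound for $\|U_\ell^*\Phi z\|_2^2$ at fixed $z$, combined with a covering argument over the sphere of $2k$-sparse unit vectors, along the lines developed in prior work on Sobolev duals for sub-Gaussian frames. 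This is the chief technical obstacle, since the deterministic orthogonal transformation $U_\ell^*$ is dictated by the SVD of $D^r$ (not by $\Phi$), and one must verify that its interaction with the random entries of $\Phi$ preserves the RIP at the sparsity level required here.

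Given this ingredient, the proof concludes by appealing to the standard deterministic stable/robust $\ell_1$-minimization theorem: any $\hat{x}$ with $\|\hat{x}\|_1\leq\|x\|_1$ and $\|\Phi'(\hat{x}-x)\|_2\leq\eta$ satisfies $\|\hat{x}-x\|_2\lesssim\eta + \sigma_k(x)/\sqrt{k}$ whenever $\Phi'$ has suitable RIP. Substituting the bound from the second paragraph into this estimate yields \eqref{eq:l2err_intro}.
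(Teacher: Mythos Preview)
Your proposal is correct and follows essentially the same strategy as the paper: project onto the span $U_\ell$ of left singular vectors of $D^r$ associated with its $\ell$ smallest singular values, invoke the RIP of $\tfrac{1}{\sqrt\ell}U_\ell^*\Phi$ for sub-Gaussian $\Phi$ (the paper's Proposition~\ref{pro: subG}), and conclude via the Foucart-type robust $\ell_1$-recovery bound (Proposition~\ref{pro:foucart}). The only cosmetic difference is that the paper packages the quantization and noise contributions into a single augmented matrix $H=[\gt D^r,\ (\epsilon/\delta)I]$ and applies $H^\dagger$ before truncating to the leading $\ell$ singular components, whereas you apply $U_\ell^*$ directly to the error identity $\Phi(\hat x-x)=D^r(\hat u-u)+(\omega-\hat\nu)$ and bound the two pieces separately; since $HH^T=\gt^2 D^r(D^r)^T+(\epsilon/\delta)^2 I$ shares its eigenvectors with $D^r(D^r)^T$, the two projections coincide and the resulting estimates are the same.
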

We note that Theorem \ref{thm:main} is a more precise version of Theorem \ref{thm:main_intro} with all the probabilities and constants made explicit. 
Our main result shows that $\sd$ quantization of compressed sensing measurements has the following desirable features.
\begin{itemize}
\item \emph{\bf Polynomial error decay}: By \eqref{eq:l2err_intro}, the recovery error associated with quantization decays polynomially in the number of measurements. Polynomial error decay outperforms the lower bound on the error decay associated with memory-less scalar quantization, which is only linear. 

\item \emph{\bf Generality of measurement systems:} The result holds for a broad class of sub-Gaussian compressed sensing matrices. 
\item \emph{\bf Stability and robustness:} It holds for arbitrary signals in the presence of measurement noise. 
\item \emph{\bf Coarse (even 1-bit) quantization:} It applies even in the case of $1$-bit quantization, when appropriate $\sd$ schemes, such as those of \cite{G-exp, DGK10},  are used. 
\end{itemize}


In addition to the above features, Theorem \ref{thm:main_intro}, which holds for various choices of $\ell \in [c k \log(N/k),m]$, has several important implications. Some of these implications, listed below, can be fleshed out by a careful optimization of appropriate parameters in \eqref{eq:l2err_intro} under various signal and noise models. 
\begin{itemize}
\item \emph{\bf Root-exponential error decay for $k$-sparse vectors}
  (Corollary \ref{cor:exp2}): By optimizing the order of the $\sd$
  quantization scheme, one can show that the reconstruction error
  decays root-exponentially in the number of measurements in the
  absence of measurement noise. That is, defining
  $\lambda:=\frac{m}{k\log(N/k)}$, $$\|\hat{x}-x\|_2 \lesssim
  e^{-c\sqrt{\lambda}}.$$ This matches the best known bounds (in the
  absence of further encoding\footnote{It was shown in \cite{IS13}
    that by incorporating an extra encoding stage, exponential
    accuracy in the bit-rate can be achieved. Similar ideas, with
    similar results, can be applied in the compressed sensing
    scenario, but we leave this topic for a different paper.}) for the
  finite frames case of $\sd$ quantization \cite{KSW12}. 

\item \emph{\bf Error decay for compressible vectors} (Corollary
  \ref{cor:compr}): We model a compressible vector $x$ as an element
  of the unit ball of a weak $\ell_p$ space $w\ell_p^N$ with $p\in (0,1)$. That is, the coefficients of $x$ satisfy $|x_{(j)}| \le
  j^{-1/p}$, where $x_{(j)}$ is the $j$th largest-in-magnitude entry of
  $x$. The reconstruction error associated with  such vectors, in the noise-free scenario, satisfies
$$\|\hat{x}-x\|_2\lesssim \left(\frac{\log N}{m}\right)^{\frac{(1/p-1/2)(r-1/2)}{r+1/p-1}}.$$
We note that the lower bound on the reconstruction error, $\varepsilon$, associated with quantizing vectors in $\ell_p$ using $R$ bits, with $R\in[\log_2 N, N]$, is (see, e.g., \cite{schutt1984entropy})
$$\varepsilon \gtrsim \left(\frac{\log(N/R+1)}{R}\right)^{1/p-1/2}.$$
Thus, for a fixed $p$ and using a coarse $\sd$ quantizer (for example, as in \cite{G-exp}), increasing $r$ allows us to approach the optimal error rate for compressible signals. 

\end{itemize}


\section{A/D conversion for compressed sensing by $\sd$ quantization}

\subsection{Basics of $\sd$ quantization}\label{basicsd}
$\sd$ quantizers were introduced by Inose and Yosuda
\cite{inose1963unity} in the context of analog-to-digital conversion
for oversampled bandlimited functions. Because $\sd$ quantization
schemes are robust with respect to certain circuit implementation
imperfections, they have seen widespread practical use since their
introduction, see, e.g., \cite{NST96}. Moreover, the mathematical
literature on $\sd$ quantization has significantly matured over the
last 15 years, see, e.g., \cite{daub-dev, G-exp, DGK10,
  yilmaz2002stability}. For example, as we will discuss below, $\sd$
schemes were shown to be more effective than MSQ in utilizing
redundancy when quantizing redundant frame expansions, see, e.g.,
\cite{Benedetto2006sigma, BPA2007, blum:sdf, KSW12, KSY13},
cf. \cite{PSY13}.

Given a (finite or infinite) sequence $y:=(y_i)_{i\in\mathcal{I}}$, the standard first-order $\sd$ quantizer runs the following iteration for $i\in \mathcal{I}$: 
\begin{align}\label{eq:SDrec}
q_i&=Q_\A(y_i+u_{i-1}) \\
u_i&=u_{i-1}+y_i-q_i \notag.
\end{align}
Here $Q_\A$ is the scalar quantizer defined in \eqref{eq:scalar}. More
generally, an {$r^{\rm th}$-order $\Sigma \Delta$ quantization scheme}
with quantization rule $\rho: \mathbb{R}^{r+1} \rightarrow \mathbb{R}$
and scalar quantizer $Q_\A$
computes $q\in \A^\mathcal{I}$ via the recursion 
\begin{equation}
q_i = Q_\A\left(\rho(y_i,u_{i-1},u_{i-2},\dots,u_{i-r})\right),
\label{equ:rthOrdq}
\end{equation}
\begin{equation}
u_i =  y_i - q_i - \sum^{r}_{j=1} {r \choose j} (-1)^j u_{i-j}
\label{equ:rthOrdu}
\end{equation}
for $i \in \mathcal{I}$.  For example, in the compressed sensing and
finite frame scenarios (see below) $y_i = \langle \phi_i, x \rangle$,
$i= 1,...,m$. In other words $y=\Phi x$ and the relationship between
${ x}$, ${ u}$, and ${ q}$ can be concisely written using
matrix-vector notation
as 
\begin{equation}
D^r { u} ~=~ \Phi x - { q}.
\label{equ:LinrSigDelt}
\end{equation} 
Here the matrix $D$ is the $m\times m$ first-order difference matrix with entries given by \begin{equation}
D_{i,j} := \left\{ \begin{array}{ll} 1 & \textrm{if}~i=j\\ -1 & \textrm{if}~i = j+1\\ 0 & \textrm{otherwise}\end{array} \right..
\label{Def:diffMat}
\end{equation}
Due to the role that $u$ plays in \eqref{equ:LinrSigDelt} and in controlling the reconstruction error, we will focus on  \textit{stable $r${th} order schemes}. These are schemes for which \eqref{equ:rthOrdq} and \eqref{equ:rthOrdu} result in \[\|  u \|_{\infty} \leq  \gamma := \gamma(r)\] for all $m \in \mathbb{N}$, and for all ${ y} \in \mathbb{R}^m$ with $\|{ y}\|_\infty \leq 1$. Clearly, one can scale the definition of stability to replace the upper bound on $\|y\|_\infty$ by an arbitrary constant $\mu$. However, we  require that $\gamma:  \mathbb{N} \mapsto \mathbb{R}^+$ be independent of both $m$ and ${ y}$.  Stable $r${th} order $\Sigma \Delta$ schemes with $\gamma(r) = O(r^r)$ exist \cite{G-exp, DGK10} even when $\mathcal{A}$ is a $1$-bit alphabet. 
We will focus on two important families of stable $\sd$ schemes of arbitrary order. 

\subsubsection*{\bf The $r$th-order greedy $\Sigma\Delta$ quantizer} The $r$th-order greedy $\Sigma\Delta$ quantizer is obtained by setting 
$$
\rho(u_{n-1},...,u_{n-r},y_i):=\sum\limits_{j=1}^r (-1)^{j-1} \binom{r}{j} u_{n-j}+y_n
$$
in \eqref{equ:rthOrdq}.
This choice results in a stable $\sd$ quantizer if we use the $2K$-level midrise alphabet $\A_K^\delta$ with $K\geq 2 \lceil \frac{\|y\|_\infty}{\delta}\rceil +2^r+1$. The size of the alphabet associated with the greedy $r$th-order $\sd$ scheme grows exponentially in $r$. On the other hand, the stability constant \[\gamma(r)=\delta/2\] remains fixed as $r$ grows. 

\subsubsection*{\bf The $r$th-order coarse $\sd$ quantizers}
A different tradeoff between the size of the alphabet and the stability constant  arises with the so-called  \emph{$r$th-order coarse $\sd$ quantizers}. These are $r$th-order $\sd$ quantizers with a fixed  alphabet $\A$ (for example $\A=\{\pm1\}$)  and were first constructed in \cite{daub-dev}. Adopting  
the scheme of \cite{DGK10}, we have
\begin{align}\label{eq:SDrec2}
q_n&=Q_{\A_K^\delta}((h*v)_n+y_n) \\
v_i&=(h*v)_n+y_n-q_n \notag,
\end{align}
where $*$ denotes convolution. In \cite{G-exp, DGK10}, it was shown that by choosing $h$ carefully one obtains a stable $\sd$ scheme when $\|y\|_\infty<\mu$ as above. In fact, when $\|h\|_1 \le 2K-2\mu/\delta$ one can rewrite \eqref{eq:SDrec2} in the form of \eqref{equ:rthOrdq} and \eqref{equ:rthOrdu} via a change of variable. One can then show that the corresponding scheme is stable with constant \[\gamma(r)=\Cl{c3}^r  r^r\delta\]   for some constant $\Cr{c3}$ that only depends on the alphabet $\A_K^\delta$ and $\mu$.  




\subsection{$\sd$ quantization for frame expansions}
As we mentioned above, while initially proposed for quantizing
bandlimited functions, recent work on $\sd$ quantization has shown it
to be well suited for quantizing finite frame expansions (e.g.,
\cite{Benedetto2006sigma, blum:sdf,BPA2007,KSW12})\footnote{A finite frame \cite{casazza2012finite,kovacevic2007life} for $\R^k$ is a collection of $m$ ($m\geq k$) vectors that spans $\R^k$. As such it can be identified (up to permutations) with an $m\times k$ matrix, say $E$, whose rows are the frame vectors.}. In the $\sd$ quantization context, consider full-rank matrices $E\in \R^{m\times k}$, $m\ge k$ selected from certain classes, such as harmonic frames \cite{Benedetto2006sigma}, piecewise-smooth frames \cite{blum:sdf}, Sobolev self-dual frames \cite{KSW12}, Gaussian frames \cite{GLPSY13}, sub-Gaussian frames \cite{KSY13}.  Let $x\in \R^k$ and suppose that $q$ is obtained by quantizing $Ex$ using an $r$th-order $\sd$ quantizer. A typical results states that $E$ admits a specific left inverse (equivalently, a dual frame),  called its $r$th-order Sobolev dual and denoted by $F_{\text{Sob},r}(E)$  such that the reconstruction error $\|F_{\text{Sob},r}(E) q-x\|_2$ decays like $O(\lambda^{-r})$. Here $\lambda:=m/k$ is the oversampling rate. In fact, this error decay is typical of the deterministic frames mentioned above. On the other hand, the error for random frames is $O(\lambda^{-\alpha(r-1/2)})$ with high probability that depends on $\alpha\in (0,1)$.  

This error decay is remarkably similar to the error decay in the
bandlimited case, which is also inverse polynomial in the oversampling
rate. In the bandlimited case, this allows improving the
reconstruction accuracy by sampling at a faster rate. In the finite frame setting one can also improve the reconstruction accuracy without changing the quantization hardware by  simply increasing $m$, i.e., collecting more samples. This allows decreasing the reconstruction error beyond the accuracy of the quantizer itself. 

\subsection{$\sd$ quantization for compressed sensing: an overview of existing results}\label{sec:SD_existing}
As stated in the Introduction, \cite{GLPSY13, KSY13, Feng2014} showed
that by using $\sd$ schemes instead of MSQ, finite frame quantization
techniques (e.g.,
\cite{Benedetto2006sigma,blum:sdf,BPA2007,KSW12,KSY13}) can be
leveraged to the compressed sensing
case. 
A key insight is that knowing the support of an underlying sparse
vector reduces a compressed sensing quantization problem to a frame
quantization problem. To be precise, let $x\in \Sigma_k^N$ with
$\text{supp}(x)=T$ and let $\Phi\in \R^{m\times N}$ be a compressed
sensing measurement matrix. Then, we have
$$y=\Phi x \ \ \implies \ \ y=\Phi_T x_T.$$
Accordingly, \cite{GLPSY13}  proposed using an $r$th order $\sd$-scheme, and a sufficiently fine, fixed alphabet, to quantize compressive samples of \emph{strictly sparse signals} whose smallest non-zero entry is bounded away from zero. Moreover, \cite{GLPSY13}  introduced an associated two-stage algorithm for recovering the signals from the quantized compressed sensing measurements. In the first stage of the algorithm, the signal support is estimated. Here one uses a standard robust compressed sensing decoder, e.g., \eqref{eq:BPDN}, to obtain a coarse estimate of $x$, say $\widetilde{x}$. The support estimate $\widetilde{T}$ is then the subset of $\{1,...,N\}$ on which the $s$ largest (in magnitude) entries of $\widetilde{x}$ reside. When $\Phi$ is a Gaussian random matrix, \cite{GLPSY13} showed that $\widetilde{T}=T$ with high probability if the smallest non-zero entry of $x$ exceeds a constant multiple of the quantizer step size $\delta$. Having revealed the support $T$,  linear reconstruction using \emph{frame theoretic methods} recovers the signal via\[\hat{x}=F_{\text{Sob},r}(\Phi_{\tilde{T}}) q .\]  
The main result of \cite{GLPSY13} was that in this setup (where the signals are strictly sparse and bounded away from zero, and the alphabet is sufficiently large) the reconstruction error associated with an $r$th order $\sd$-scheme decays like $(k/m)^{\alpha(r-1/2)}$, with high probability depending on the parameter $\alpha\in(0,1)$.
These results were generalized to the case of compressed sensing matrices with sub-Gaussian entries in \cite{KSY13} and sub-Gaussian columns in \cite{Feng2014}.

\section{$\sd$ quantization for compressed sensing: a novel one-stage reconstruction method}
\subsection{The algorithm} An important shortcoming of the two-stage
algorithm describe in the previous section is its need for an accurate estimate for the support of $x$. This restricts the use of the algorithm to sparse vectors with smallest (in magnitude) entry bounded away from zero. Moreover, it restricts the quantizer step size $\delta$ to be (up to a constant) as small as the lower bound on the non-zero entries. In turn, this precludes the use of coarse quantization with small alphabets.  An additional shortcoming of the two-stage algorithm (with reconstruction using Sobolev duals) is its lack of robustness to additive noise. 

Instead of the two-stage algorithm, we propose estimating $x$ as the solution to a convex optimization problem that takes $q=Q_{\Sigma\Delta}^r(\Phi x +\eta)$, where $\|\eta\|_\infty\le  \epsilon$, as input. It produces an output $\hat{x}$ via
\begin{align}\label{eq:decoder}
(\hat{x},\hat{\nu}) :=  \arg\min\limits_{(z,\nu)}\|z\|_1 \  \text{ subject to } &  \|D^{-r}(\Phi z+\nu-q )\|_2 \leq \gamma(r)\sqrt{m}  \notag \\
\text{\ \ and\ \ } & \|\nu\|_2\leq \epsilon \sqrt m.
\end{align} 
Here $\gamma(r)$ is the stability constant associated with the $\sd$ scheme. As discussed in Section \ref{basicsd}, when using a midrise quantization alphabet with a greedy quantization scheme, $\gamma(r)=1/2$. Furthermore, when using the coarse scheme in \eqref{eq:SDrec2}, $\gamma(r)=\Cr{c3}^r r^r$.  
One may also estimate $x$ as a solution to a related (and perhaps more natural) optimization problem,  
\begin{align}\label{eq:decoder2}
(\hat{x},\hat{\nu}) :=  \arg\min\limits_{(z,\nu)}\|z\|_1 \  \text{ subject to } &  \|D^{-r}(\Phi z+\nu-q )\|_\infty \leq \gamma(r)  \notag \\
\text{\ \ and\ \ } & \|\nu\|_2\leq \epsilon \sqrt m.
\end{align} 
We remark that both optimization problems are designed to find the vector with the smallest $\ell_1$ norm that agrees with the quantized measurements and the boundedness of the non-quantization noise. The first constraint in each of \eqref{eq:decoder} and \eqref{eq:decoder2} is motivated by the simple observation that the stability of the $\sd$ quantization schemes used implies that $\|D^{-r}(\Phi x+\eta-q )\|_\infty = \|u\|_\infty  \leq \gamma(r) $, which in turn implies that $\|u\|_2 \leq \gamma(r)\sqrt{m}$. 
The remainder of the paper is dedicated to proving that the above reconstruction algorithms have none of the shortcomings of the two-stage algorithm. Our proofs will be for the optimization problem \eqref{eq:decoder} but they apply equally to \eqref{eq:decoder2}, since any feasible point for \eqref{eq:decoder2} is also feasible for \eqref{eq:decoder}.


\subsection{Preliminaries}

We will need the following notation and results. 

\subsubsection{Sub-Gaussian random matrices}
Below, we write $x\sim \D$ when the random variable $x$ is drawn according to the distribution $\D$ and $\mathcal{N}(0,\sigma^2)$ denotes the Gaussian distribution with mean zero and variance $\sigma^2$. 
\begin{defn} Suppose $x \sim \D_1$ and $y \sim \D_2$ are random variables that satisfy $P(|x|>t)\le K P(|y|>t)$ for some constant $K$ and for all $t\ge 0$. Then $x$ is said to be $K$-dominated by $y$. 
\end{defn}
\begin{defn} 
A random variable is sub-Gaussian with parameter $c>0$ if it is $e$-dominated by $\mathcal{N}(0,c^2)$ 
\end{defn}
\begin{defn} A matrix $\Phi$ is said to be sub-Gaussian with parameter $c$, mean $\mu$, and variance $\sigma^2$ if the entries $\Phi$ are indepent sub-Gaussian random variables with parameter $c$, mean $\mu$, and variance $\sigma^2$.\end{defn}

Note that Sub-Gaussian random variables can be equivalently defined using their
moments. Moreover, in the case of zero mean random variables, their moment generating functions can be used. See
\cite{ve12-1} for these definitions and proof of equivalence.

\subsubsection{Some key results from the literature}
Next we review some useful results that are instrumental for the error estimates presented in the next section. 
We begin with a proposition from \cite{KSY13} which controls the restricted isometry constants of some matrices of interest. Here, given a matrix $V$ we denote by $V_\ell$ the matrix formed by the first $\ell$ rows of $V$. 
\begin{prop} [\cite{KSY13}] \label{pro: subG}Let $\Phi$ be an $m\times N $ sub-Gaussian matrix with mean zero, unit variance, and parameter $c$, let $V$ be an orthonormal matrix of size $m\times m$. Fix some $\beta >0$, some $k \in \mathbb{Z}^+$ with $k<\frac{m}{c_\beta\log N/k} $ and some $\ell \in \mathbb{Z}^+$ with  $\ell\geq \Cr{c1} k \log N/k$. Then, there is a $\rho_\beta$ such that
\[
\mathbb{P}\left( \sup_{T\subseteq [N], |T|\leq k}  \| \frac{1}{\ell} \Phi _T^T V_\ell V_\ell^T\Phi_T-I\|_2 \geq \delta \right) \leq e^{-\rho_\beta \ell}
\]
where $c_\beta$ $\rho_\beta$ only depend on $\delta$ and $c$.
\end{prop}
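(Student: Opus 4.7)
The plan is to establish the supremum bound by combining pointwise concentration of quadratic forms (via Hanson--Wright) with a covering argument over the unit sphere and a union bound over all possible supports $T$. Write $P := V_\ell V_\ell^T$; since $V$ is orthogonal, $P$ is an orthogonal projection of rank $\ell$ on $\R^m$, so $\|P\|_2 = 1$, $\mathrm{tr}(P) = \ell$, and $\|P\|_F^2 = \ell$.

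Fix $T \subseteq [N]$ with $|T|\leq k$ and a unit vector $v\in\R^T$. Because the rows of $\Phi$ are independent and each entry is sub-Gaussian with mean zero, variance one, and parameter $c$, the vector $\Phi_T v\in \R^m$ has independent sub-Gaussian coordinates with variance one and sub-Gaussian parameter $O(c)$. Applying Hanson--Wright to the quadratic form $(\Phi_T v)^T P(\Phi_T v) = \|V_\ell^T \Phi_T v\|_2^2$, whose expectation equals $\mathrm{tr}(P) = \ell$, and using $\|P\|_2 = 1$, $\|P\|_F^2 = \ell$, yields
\begin{equation*}
\mathbb{P}\!\left(\left|v^T \tfrac{\Phi_T^T P \Phi_T}{\ell} v - 1\right| > \tfrac{\delta}{2}\right) \leq 2\exp(-c_1 \delta^2 \ell),
\end{equation*}
where $c_1 > 0$ depends only on $c$.

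Next I would lift this pointwise bound to a uniform bound over $S^{k-1}$ using a standard $1/4$-net $\mathcal{N}$ on the unit sphere in $\R^T$, of cardinality at most $9^k$, together with the symmetric-matrix approximation $\|A\|_2 \leq 2\sup_{v\in \mathcal{N}}|v^T A v|$. A union bound across $\mathcal{N}$ shows that $\|\frac{1}{\ell}\Phi_T^T P \Phi_T - I\|_2 \leq \delta$ fails with probability at most $2\cdot 9^k \exp(-c_1 \delta^2 \ell/4)$. A second union bound over the $\binom{N}{k}\leq (eN/k)^k$ choices of $T$ gives total failure probability at most $\exp(k\log(9eN/k) - c_2 \delta^2 \ell)$, and choosing $c_\beta$ sufficiently large in the hypothesis $\ell \geq c_\beta k \log(N/k)$ makes this at most $e^{-\rho_\beta \ell}$ for a suitable $\rho_\beta > 0$ depending only on $\delta$ and $c$.

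The main obstacle is the Hanson--Wright step itself: although the entries of $\Phi_T v$ are only sub-Gaussian (so the joint distribution is not orthogonally invariant and $V_\ell^T \Phi$ need not have i.i.d.\ entries), the quadratic-form concentration still proceeds at the rate $\exp(-c\delta^2 \ell)$ rather than $\exp(-c\delta^2 m)$ because the relevant parameters $\|P\|_F^2 = \ell$ and $\|P\|_2 = 1$ depend on the \emph{rank} of the projection rather than its ambient dimension. This rank reduction is ultimately what allows $\ell$ effective measurements (rather than $m$) to suffice for RIP-type control on $k$-sparse vectors. The remaining arguments are routine covering-number calculations and constant bookkeeping.
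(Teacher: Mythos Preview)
The paper does not supply its own proof of this proposition; it is quoted from \cite{KSY13} as a background result. Your argument is correct and is essentially the standard route to such RIP-type bounds for projected sub-Gaussian matrices: Hanson--Wright concentration for the quadratic form $(\Phi_T v)^T P (\Phi_T v)$ with $P=V_\ell V_\ell^T$ a rank-$\ell$ orthogonal projection (so $\|P\|_F^2=\ell$, $\|P\|_2=1$), followed by a net argument on $S^{k-1}$ and a union bound over $\binom{N}{k}$ supports. The key observation you isolate---that the deviation is governed by $\|P\|_F^2=\ell$ rather than the ambient dimension $m$---is exactly what makes the result go through with only $\ell\gtrsim k\log(N/k)$, and this is the same mechanism exploited in \cite{KSY13}. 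There is nothing to correct; your sketch would compile into a full proof with routine constant-tracking.
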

The next proposition provides useful bounds on the singular values of $D^{-r}$. 
\begin{prop}[\cite{GLPSY13}] \label{pro:singular} 
The singular values of the $r$-th order difference matrix $D^r$ satisfy
\begin{equation}\label{eq:sinvalue} 
\sigma_j(D^{-r})\geq\frac{1}{(3\pi r)^r} \left(\frac{m}{j}\right)^r.
\end{equation}
\end{prop}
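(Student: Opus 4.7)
The bound is equivalent, via $\sigma_j(D^{-r})\sigma_{m-j+1}(D^r)=1$, to producing for each $j$ a $j$-dimensional subspace of $\mathbb{R}^m$ on which the operator $D^r$ has small norm. I would construct such a subspace from low-frequency trigonometric vectors and apply the Courant--Fischer min-max characterization of singular values.

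Concretely, the plan is as follows. First, select $j$ low-frequency vectors $v_1,\ldots,v_j\in\mathbb{R}^m$, for instance $v_k(n)=\sin(\omega_k n)$ with frequencies $\omega_k$ spaced by $O(1/m)$ inside $[0,\pi j/m]$. Let $V=\mathrm{span}\{v_1,\ldots,v_j\}$. Second, bound $\|D^r v_k\|_2$ by comparing $D^r$ to the circulant difference operator $\widetilde{D}^r$ with the same first column. On $\widetilde{D}^r$ these trigonometric vectors are essentially eigenvectors with eigenvalue $(1-e^{-i\omega_k})^r$, and the symbol estimate
$$|(1-e^{-i\omega_k})^r|=\bigl(2\sin(\omega_k/2)\bigr)^r\leq \omega_k^r\leq (\pi j/m)^r$$
gives $\|\widetilde{D}^r v_k\|_2\lesssim (\pi j/m)^r\|v_k\|_2$. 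The discrepancy $D^r-\widetilde{D}^r$ is supported in an $r\times r$ corner and can be bounded using the explicit formula $(D^r)_{ij}=(-1)^{i-j}\binom{r}{i-j}\mathbf{1}[0\le i-j\le r]$.

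Third, after verifying that the $v_k$ are near-orthonormal (so that the $\ell^2$ norm on $V$ is comparable to the coefficient $\ell^2$ norm), applying min-max yields
$$\sigma_{m-j+1}(D^r)\leq \max_{v\in V,\,\|v\|_2=1}\|D^r v\|_2\leq (3\pi r)^r (j/m)^r,$$
and inverting this inequality gives the claim. The constant $(3\pi r)^r$ naturally decomposes as a $\pi^r$ from the symbol bound, a mild multiplicative factor absorbing boundary and near-orthogonality slack, and a factor of $r^r$ arising from the $r$-dependent binomial weights $\binom{r}{\ell}$ in the boundary block.

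The main obstacle will be tracking the $r$-dependence sharply enough to land at $(3\pi r)^r$ rather than the looser $(Cr)^{2r}$ or $r!$ that a naive accounting produces. This forces a careful choice of the frequency spacings $\omega_k$ (with an $r$-dependent cutoff away from the boundary) and a sharp $\ell^2$ estimate on the corner correction $D^r-\widetilde{D}^r$. An alternative, cleaner route would be to cite a spectral identity for the truncated Toeplitz operator with symbol $(1-e^{-i\omega})^r$ that yields the bound directly; but in the absence of such an identity, the technical heart of the argument is the $r$-uniform comparison of the Toeplitz matrix $D^r$ to its circulant embedding on an enlarged grid.
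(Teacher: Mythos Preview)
The paper does not prove this proposition; it is quoted from \cite{GLPSY13} and used as a black box in the proof of Theorem~\ref{thm:main}. There is therefore no argument in the present paper to compare yours against.

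That said, the route you sketch has a genuine gap for $r\ge 2$, and it is not merely a matter of tracking constants. Take the normalized test vector $v(n)=\sqrt{2/m}\,\sin(\omega n)$ with $\omega\le \pi j/m$. One application of $D$ is harmless: a direct calculation gives $(Dv)(n)=2\sin(\omega/2)\sqrt{2/m}\,\cos\bigl(\omega(n-\tfrac12)\bigr)$ for \emph{every} $n\ge 1$, with no boundary discrepancy. A second application, however, yields
\[
(D^2 v)(n)=-4\sin^2(\omega/2)\,\sqrt{2/m}\,\sin\bigl(\omega(n-1)\bigr),\qquad n\ge 2,
\]
while $(D^2 v)(1)=\sqrt{2/m}\,\sin\omega\asymp \omega/\sqrt m$. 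For $\omega\sim j/m$ with $j\ll\sqrt m$, this single boundary entry already forces $\|D^2 v\|_2\gtrsim j\,m^{-3/2}$, which exceeds the target $(6\pi)^2(j/m)^2$ by a factor of order $\sqrt m/j$. The effect only worsens with $r$: the first $r-1$ entries of $D^r v$ carry contributions of order $\omega/\sqrt m$ rather than $\omega^r$, because zero-extending (or periodically extending) a sine that vanishes only to first order at $n=0$ creates a jump in every higher finite difference. So the $r\times r$ corner block is not a lower-order perturbation that can be absorbed into the constant; for small $j$ it is the dominant term and destroys the bound.

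A repair requires test vectors that match the boundary condition to order $r-1$ at the left endpoint, and building such a $j$-dimensional space while keeping near-orthogonality and the right frequency localization is the actual technical content of the argument. In particular, the $r^r$ loss does not arise from the binomial weights in the corner block as you suggest; that block has operator norm only $O(2^r)$ and becomes irrelevant once the test space carries the correct boundary behavior.
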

We will need the following two propositions, the first of which follows from Theorem 5 and Proposition 8 in \cite{Foucart13}. 
\begin{prop}[\cite{Foucart13}]\label{pro:foucart} Let $f,g\in \mathbb{C}^N$, and $\Phi \in \mathbb{C}^{m,N}$. Suppose that $\Phi$ has $\delta_{2k}$-RIP with $\delta_{2k}<1/9$, then for any $1\leq p\leq 2$, we have
\[
\|f-g\|_p\leq \Cl{c10}k^{1/p-1/2} \|\Phi(f-g)\|_2+\frac{\Cl{c11}}{k^{1-1/p}}(\|f\|_1-\|g\|_1+2\sigma_k(g)_1),
\]
where $\Cr{c10}$ and $\Cr{c11}$ are constants that only depend on $\delta_{2k}$.
\end{prop}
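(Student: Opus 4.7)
The plan is to set $h := f-g$ and combine two classical ingredients: an $\ell_1$ \emph{cone} estimate that comes from comparing $\|f\|_1$ with $\|g\|_1$, and an $\ell_2$ \emph{tube} estimate that comes from the restricted isometry hypothesis on $\Phi$. The assumption $\delta_{2k} < 1/9$ is precisely what is needed to close the inequalities that tie the two together.

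First I would let $S$ be the support of the $k$ largest-in-magnitude entries of $g$, so that $\|g_{S^c}\|_1 = \sigma_k(g)_1$. Splitting $f=g+h$ componentwise and using the triangle inequality yields
\[
\|f\|_1 \;\geq\; \|g_S\|_1 - \|h_S\|_1 + \|h_{S^c}\|_1 - \|g_{S^c}\|_1,
\]
which rearranges to the cone-type bound
\[
\|h_{S^c}\|_1 \;\leq\; \|h_S\|_1 + \bigl(\|f\|_1 - \|g\|_1\bigr) + 2\sigma_k(g)_1. \qquad (\ast)
\]
Now let $T_0$ be the indices of the $k$ largest entries of $h$, and partition the complement of $T_0$ into blocks $T_1,T_2,\dots$ of size $k$ in decreasing order of magnitude. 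Since $T_0$ is the top-$k$ of $h$, $(\ast)$ and $\|h_S\|_1\le\|h_{T_0}\|_1$ also give a bound on $\|h_{T_0^c}\|_1$.

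Next I would apply the standard RIP argument: using $\delta_{2k}$-RIP on the $2k$-sparse vector $h_{T_0\cup T_1}$, together with the componentwise monotonicity estimate $\sum_{j\geq 2}\|h_{T_j}\|_2 \leq \|h_{T_0^c}\|_1/\sqrt{k}$, one obtains
\[
\|h_{T_0\cup T_1}\|_2 \;\leq\; \frac{1}{\sqrt{1-\delta_{2k}}}\Bigl(\|\Phi h\|_2 + \sqrt{1+\delta_{2k}}\,\|h_{T_0^c}\|_1/\sqrt{k}\Bigr).
\]
Combining this with $(\ast)$, Cauchy--Schwarz $\|h_S\|_1\leq\sqrt{k}\,\|h_{T_0\cup T_1}\|_2$, and the tail estimate $\sum_{j\geq 2}\|h_{T_j}\|_2\leq \|h_{T_0^c}\|_1/\sqrt{k}$, I can absorb the $\|h_S\|_1$ term on the right-hand side (this is where $\delta_{2k}<1/9$ enters), arriving at the $p=2$ case:
\[
\|h\|_2 \;\leq\; \Cr{c10}\,\|\Phi h\|_2 + \frac{\Cr{c11}}{\sqrt{k}}\bigl(\|f\|_1-\|g\|_1+2\sigma_k(g)_1\bigr).
\]

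For general $1\leq p \leq 2$ I would write $\|h\|_p \leq \|h_{T_0}\|_p + \|h_{T_0^c}\|_p$, use the sparsity bound $\|h_{T_0}\|_p \leq k^{1/p-1/2}\|h_{T_0}\|_2$, and use a Stechkin-type inequality $\|h_{T_0^c}\|_p \leq k^{1/p-1}\|h_{T_0^c}\|_1$ (which holds because $T_0$ captures the $k$ largest entries). Substituting the $\ell_2$ estimate for $\|h_{T_0}\|_2$ and the cone estimate for $\|h_{T_0^c}\|_1$ then produces the claimed inequality with constants depending only on $\delta_{2k}$. The main obstacle is the constant bookkeeping in the step where $\|h_S\|_1$ is absorbed: one needs the net coefficient in front of $\|h_{T_{01}}\|_2$ on the right-hand side to remain strictly less than $1$, which is exactly the role of the numerical threshold $\delta_{2k}<1/9$; the rest of the argument is a transparent combination of RIP and triangle inequalities.
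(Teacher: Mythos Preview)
The paper does not prove Proposition~\ref{pro:foucart}; it simply records it as a consequence of results in \cite{Foucart13}. Your outline follows the standard cone-plus-tube argument that underlies those results and is structurally sound, but there is one concrete gap in the RIP step, and it is exactly the step where you claim the threshold $\delta_{2k}<1/9$ does its work.

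The displayed inequality
\[
\|h_{T_0\cup T_1}\|_2 \leq \frac{1}{\sqrt{1-\delta_{2k}}}\Bigl(\|\Phi h\|_2 + \sqrt{1+\delta_{2k}}\,\|h_{T_0^c}\|_1/\sqrt{k}\Bigr)
\]
arises from the naive triangle inequality $\|\Phi h_{T_0\cup T_1}\|_2 \leq \|\Phi h\|_2 + \sum_{j\geq 2}\|\Phi h_{T_j}\|_2$ together with the upper RIP bound on each summand. Once you substitute $\|h_{T_0^c}\|_1/\sqrt{k} \leq \|h_{T_0\cup T_1}\|_2 + R/\sqrt{k}$ (with $R=\|f\|_1-\|g\|_1+2\sigma_k(g)_1$), the coefficient in front of $\|h_{T_0\cup T_1}\|_2$ on the right is $\sqrt{(1+\delta_{2k})/(1-\delta_{2k})}$, which exceeds $1$ for \emph{every} positive $\delta_{2k}$, so the absorption never closes. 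The remedy is to argue through the inner product $\langle \Phi h_{T_0\cup T_1},\Phi h\rangle$ and use the RIP off-diagonal estimate $|\langle \Phi u,\Phi v\rangle|\leq \delta_{2k}\|u\|_2\|v\|_2$ for disjointly supported $k$-sparse vectors; this replaces the factor $\sqrt{1+\delta_{2k}}$ by $\delta_{2k}$, and then a smallness condition on $\delta_{2k}$ genuinely closes the loop. A second, minor point: the Stechkin-type bound you invoke, $\|h_{T_0^c}\|_p \leq k^{1/p-1}\|h_{T_0^c}\|_1$, is not valid as stated; the standard inequality has $\|h\|_1$ on the right (or is obtained via the block decomposition with an index shift), but this causes no trouble once the cone estimate $(\ast)$ is in place.
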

\begin{prop}\label{pro:singular_sub}\cite{RV09} Let $\Phi$ be an $m\times N $, $m\geq N$, sub-Gaussian matrix with mean zero, unit variance, and parameter $c$, then the largest and smallest singular values of $\Phi$ obey
\[
P(\sigma_1 (\Phi) \geq  \sqrt m+\sqrt{N} +t ) \leq e^{-\frac{dt^2}{2} }, \quad {and } \quad
P\left( \sigma_m \left(\Phi \right) \leq \sqrt m -\sqrt {N}- t \right) \leq  e^{-\frac{dt^2}{2}}.
\]
where $d>0$ is some constant only depending on $c$.
\end{prop}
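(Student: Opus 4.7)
The plan is to combine a sharp concentration inequality for $\|\Phi x\|_2$ at a fixed $x\in S^{N-1}$ with a standard $\varepsilon$-net argument over the unit sphere, in the spirit of Rudelson--Vershynin. For any fixed $x\in S^{N-1}$, the coordinates $(\Phi x)_i = \langle \phi_i,x\rangle$ are independent, mean-zero, variance-one sub-Gaussian random variables whose sub-Gaussian parameter depends only on $c$, because the entries of $\Phi$ are i.i.d.\ sub-Gaussian with parameter $c$ and $\|x\|_2=1$. Consequently each summand $(\Phi x)_i^2 - 1$ is sub-exponential with parameter depending only on $c$.

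First I would apply Bernstein's inequality for sums of independent sub-exponential random variables to $\|\Phi x\|_2^2 - m = \sum_{i=1}^m((\Phi x)_i^2 - 1)$, obtaining
\[
P\bigl(\bigl|\|\Phi x\|_2^2 - m\bigr|\geq s\bigr) \leq 2\exp\bigl(-c_1\min(s^2/m,\, s)\bigr)
\]
with $c_1$ depending only on $c$. Writing $\|\Phi x\|_2 - \sqrt{m} = (\|\Phi x\|_2^2 - m)/(\|\Phi x\|_2 + \sqrt{m})$ and choosing $s = 2\sqrt{m}\,u + u^2$, I would then derive the cleaner sub-Gaussian-type bound
\[
P\bigl(\bigl|\|\Phi x\|_2 - \sqrt{m}\bigr| \geq u\bigr) \leq 2\exp(-c_2 u^2),
\]
valid for all $u\geq 0$, with $c_2$ depending only on $c$.

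Next, I would fix a $1/4$-net $\mathcal{N}\subset S^{N-1}$ of cardinality at most $9^N$ (standard volumetric estimate). Well-known comparisons between net maxima and true singular values give $\sigma_1(\Phi)\leq \tfrac{4}{3}\max_{x\in\mathcal{N}}\|\Phi x\|_2$ and, after a short rearrangement, $\sigma_m(\Phi)\geq \min_{x\in\mathcal{N}}\|\Phi x\|_2 - \tfrac{1}{4}\sigma_1(\Phi)$, reducing control of both extreme singular values to uniform control of $\|\Phi x\|_2$ over $\mathcal{N}$. A union bound over $\mathcal{N}$ using the fixed-$x$ concentration yields
\[
P\bigl(\max_{x\in\mathcal{N}}\bigl|\|\Phi x\|_2 - \sqrt{m}\bigr|\geq u\bigr) \leq 2\cdot 9^N\exp(-c_2 u^2),
\]
and choosing $u = C_0(\sqrt{N}+t)$ with $C_0$ large enough to absorb the factor $9^N = e^{N\log 9}$ into the exponent gives the advertised tail bound $e^{-dt^2/2}$, with $d$ depending only on $c$.

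The main obstacle is the quantitative passage from the two-regime Bernstein bound (involving both $s^2/m$ and $s$) to a clean sub-Gaussian-type deviation for $\|\Phi x\|_2 - \sqrt{m}$, and threading the constants through the net step so that the $\sqrt{N}$ shift and the exponent $-dt^2/2$ emerge with the correct normalization. Handling the two singular-value estimates simultaneously on a single high-probability event is a minor bookkeeping issue once the net-level concentration is in place, and all remaining steps follow the Rudelson--Vershynin template cited in the statement.
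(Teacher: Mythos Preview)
The paper does not prove this proposition; it is quoted without proof from \cite{RV09} as one of several ``key results from the literature'' used later in the argument. Consequently there is no paper proof to compare against. Your sketch is exactly the standard Rudelson--Vershynin argument that underlies the cited result: pointwise sub-Gaussian concentration of $\|\Phi x\|_2$ around $\sqrt{m}$ via Bernstein for sub-exponential summands, then a $\tfrac14$-net over $S^{N-1}$ with the usual comparisons $\sigma_1(\Phi)\le \tfrac{4}{3}\max_{x\in\mathcal N}\|\Phi x\|_2$ and $\sigma_N(\Phi)\ge \min_{x\in\mathcal N}\|\Phi x\|_2-\tfrac14\sigma_1(\Phi)$, and finally a union bound with $u=C_0(\sqrt N+t)$ to absorb the $9^N$ cardinality factor.

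One small caveat worth flagging: the net argument as you outline it naturally produces a bound of the form $\sigma_1(\Phi)\le \sqrt m + C(\sqrt N + t)$ with a constant $C$ depending on the sub-Gaussian parameter $c$, rather than the sharp $\sqrt m+\sqrt N+t$ written in the proposition (which is really the Gordon/Gaussian form). This does not affect how the proposition is used in the paper---only a qualitative RIP-type consequence is extracted---but if you are aiming for the displayed inequality literally, you should either restrict to the Gaussian case or rewrite the conclusion with the extra constant in front of $\sqrt N$.
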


\subsection{Error estimates}
The following theorem provides an error bound on the estimate from \eqref{eq:decoder}.
\begin{thm}\label{thm:main} Let $\Phi$ be an $m\times N$ 
sub-Gaussian matrix with mean zero and unit variance, and let $k$ and
  $\ell$ be in $\{1,\dots,m\}$. Denote by $Q_{\sd}^r$ a stable
  $r$th-order scheme with alphabet $\A_K^\delta$ and stability
  constant $\gamma(r)$. There exist positive constants $c$ and
  $\rho$ such that whenever $m\ge
  \ell \ge c k \log N/k$ the following holds with probability
  greater than $1-\exp(-c\rho k\log(N/k))$ on the draw of $\Phi:$

Let $x\in \R^N$ such that $\|\Phi x\|_\infty \le \mu < 1$. Suppose $q:=Q_{\sd}^r(\Phi x+\eta)$,  where the additive noise $\eta$ satisfies $\|\eta\|_\infty\le \epsilon$ for some $0\leq	\epsilon<1-\mu$.  Then the solution $\hat{x}$ to \eqref{eq:decoder} satisfies
\begin{equation}\label{eq:l2err}
\|\hat{x}-x\|_2 \leq \Cl{c4} \left(\frac{m}{\ell}\right)^{-r+1/2}\delta+\Cl{c5} \frac{\sigma_k(x)}{\sqrt k}+\Cl{c6}\sqrt{\frac{m}{\ell}}\epsilon , 
\end{equation}
where $\Cr{c4}$, $\Cr{c5}$, and $\Cr{c6}$ are explicit constants given in the proof.
\end{thm}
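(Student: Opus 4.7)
The plan is to reduce the analysis of \eqref{eq:decoder} to a standard sparse-recovery estimate (Proposition~\ref{pro:foucart}) applied to a judicious low-dimensional projection of $\Phi$. Writing the SVD $D^{-r}=USV^T$ with singular values $s_1\ge\dots\ge s_m>0$, let $\widetilde U,\widetilde V\in\R^{m\times\ell}$ be the first $\ell$ columns of $U,V$, and let $S_\ell=\operatorname{diag}(s_1,\dots,s_\ell)$. The surrogate measurement matrix will be $A:=\tfrac{1}{\sqrt\ell}\,\widetilde V^{T}\Phi\in\R^{\ell\times N}$. Applying Proposition~\ref{pro: subG} to the (deterministic) orthonormal matrix $V$ of right singular vectors of $D^{-r}$, with its tolerance chosen smaller than $1/9$ and with $2k$ in place of $k$, shows that $A$ satisfies the $\delta_{2k}<1/9$ hypothesis of Proposition~\ref{pro:foucart} with probability at least $1-e^{-\rho\ell}$, provided $\ell\ge ck\log(N/k)$ for a sufficiently large absolute constant $c$.

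Next I would bound $\|A(\hat x-x)\|_2$. Observe first that $(x,\eta)$ is feasible for \eqref{eq:decoder}: $D^{-r}(\Phi x+\eta-q)=u$ gives $\|u\|_2\le\sqrt m\,\|u\|_\infty\le\gamma(r)\sqrt m$ by $\sd$-stability, while $\|\eta\|_2\le\epsilon\sqrt m$. Optimality of $(\hat x,\hat\nu)$ therefore yields both $\|\hat x\|_1\le\|x\|_1$ and $\|D^{-r}(\Phi\hat x+\hat\nu-q)\|_2\le\gamma(r)\sqrt m$, so the triangle inequality gives
\[
\bigl\|D^{-r}\bigl(\Phi(\hat x-x)+\hat\nu-\eta\bigr)\bigr\|_2\le 2\gamma(r)\sqrt m.
\]
Left-multiplication by $S_\ell^{-1}\widetilde U^{T}$ converts $D^{-r}$ into $\widetilde V^{T}$ (since $\widetilde U^{T}USV^{T}=S_\ell\widetilde V^{T}$), at the cost of a factor $\|S_\ell^{-1}\|_2=s_\ell^{-1}\le(3\pi r)^r(\ell/m)^r$ coming from Proposition~\ref{pro:singular}. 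Absorbing the $\widetilde V^{T}(\hat\nu-\eta)$ term via $\|\widetilde V^{T}\|_2\le 1$ and $\|\hat\nu-\eta\|_2\le 2\epsilon\sqrt m$, and then dividing by $\sqrt\ell$, yields
\[
\|A(\hat x-x)\|_2\le 2(3\pi r)^r\gamma(r)\,(m/\ell)^{-(r-1/2)}+2\epsilon\sqrt{m/\ell}.
\]

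Finally, feeding $\|\hat x\|_1\le\|x\|_1$ and the preceding display into Proposition~\ref{pro:foucart} with $p=2$ produces
\[
\|\hat x-x\|_2\le \Cr{c10}\,\|A(\hat x-x)\|_2+\tfrac{2\Cr{c11}}{\sqrt k}\,\sigma_k(x),
\]
which is precisely \eqref{eq:l2err}, with explicit constants $\Cr{c4}=2\Cr{c10}(3\pi r)^r(\gamma(r)/\delta)$, $\Cr{c5}=2\Cr{c11}$, $\Cr{c6}=2\Cr{c10}$. The delicate step will be the first one: one must calibrate the tolerance in Proposition~\ref{pro: subG} so that it yields an RIP constant acceptable to Proposition~\ref{pro:foucart}, while the exceptional probability, after its use with $V$ equal to the right-singular matrix of $D^{-r}$ and with $2k$-support uniformity, still collapses to the claimed $\exp(-c\rho k\log(N/k))$. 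All other steps are deterministic and follow the outline above.
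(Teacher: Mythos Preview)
Your proof is correct and conceptually close to the paper's, using the same key ingredients (Propositions~\ref{pro: subG}, \ref{pro:singular}, \ref{pro:foucart}) and the same projection onto the span of the top $\ell$ singular vectors of $D^r$ (your $\widetilde V$ for $D^{-r}$ is exactly the paper's $U_\ell$ for $H$, since $HH^T=\gt^2D^r(D^r)^T+(\epsilon/\delta)^2I$ shares left singular vectors with $D^r$). The one genuine difference is in how the noise is handled. The paper packages the quantization and measurement-noise constraints into the single matrix $H=[\gt D^r,\tfrac{\epsilon}{\delta}I]$, applies the pseudo-inverse $H^\dagger$, and reads off the two error terms from the singular-value formula $\sigma_\ell(H^\dagger)=\bigl(\gt^2\sigma_{m-\ell}^2(D^r)+(\epsilon/\delta)^2\bigr)^{-1/2}$. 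You instead apply $S_\ell^{-1}\widetilde U^T$ directly to the $D^{-r}$-residual and then peel off $\widetilde V^T(\hat\nu-\eta)$ by a triangle inequality. Your route is a bit more elementary (no auxiliary $H$, no pseudo-inverse), at the price of slightly different constants; the paper's route has the aesthetic advantage of a single operator that simultaneously captures both error sources. Either way the argument is sound, and your remark about invoking Proposition~\ref{pro: subG} at level $2k$ (to feed $\delta_{2k}<1/9$ into Proposition~\ref{pro:foucart}) is a point the paper leaves implicit.
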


\begin{proof} Let $(z,v)$ be an ordered pair that is feasible to
  \eqref{eq:decoder}, and let $\gt:=\gamma(r)/\delta$. Define $u:=D^{-r}(\Phi z+v-q)$,
  and 
  $p:=\left(\frac{1}{\gt}u,\frac{\delta}{\epsilon} v\right)$ and note that
\begin{equation}\label{eq:p}
\|u\|_2\leq \gt \delta\sqrt m, \quad \text{and } \quad \|p \|_2 \leq  \delta \sqrt{2m}.
\end{equation}
By definition, $u$, $p$, $q$, $z$, and $v$ have the relation
\[
\Phi z-q=D^r u+v =\left[\gt D^r,\frac{\epsilon}{\delta} I \right] p, 
\]
which upon defining $H:= [\gt D^r,\frac{\epsilon}{\delta} I ]$ becomes
\begin{equation}
\Phi z-q=Hp.\label{eq:arb}
\end{equation}
Let $U\Sigma V^T=H$ be the singular value decomposition of $H$, and denote by $H^{\dagger} := V^T \Sigma ^{-1} U$ the pseudo-inverse of $H$. Now, apply $H^\dagger$ to both sides of \eqref{eq:arb} to obtain
\begin{equation}\label{eq:sudo}
H^\dagger (\Phi z-q) = H^\dagger Hp = V V^Tp.
\end{equation}
Then, \eqref{eq:p} and \eqref{eq:sudo} together imply that for any $z$ that satisfies the constraint in \eqref{eq:decoder}, we have
\begin{equation}\label{eq:bd}
\|H^\dagger(\Phi z-q)\|_2\leq \delta \sqrt {2m}.
\end{equation}
In particular, both $x$ and $\hat{x}$ satisfy the constraint in \eqref{eq:decoder}, so by the triangle inequality
\begin{equation}\label{eq:lb}
\|H^\dagger\Phi (x-\hat x)\|_2\leq \|H^\dagger(\Phi x-q)\|_2+\|H^\dagger(\Phi \hat x-q)\|_2 \leq 2 \delta \sqrt {2m}.
\end{equation}
On the other hand, the singular values of $H^\dagger$ can be expressed in terms of those of $D^r$. Thus, using Proposition \ref{pro:singular} we have
\begin{equation}\sigma_\ell(H^\dagger) = \left(\gt^2 \sigma_{m-\ell}^2(
  D^r)+\left(\frac{\epsilon}{\delta}\right)^2 \right)^{-1/2}\geq
\left(\gt^2\left(\frac{3\pi
      r\ell}{m}\right)^{2r}+\left(\frac{\epsilon}{\delta}\right)^2
\right)^{-1/2}.\label{eq:sig_l}\end{equation} 
Denoting by $\Sigma^{-1}_\ell$ and $\Phi_\ell$  the first $\ell$ rows of $\Sigma^{-1} $ and $\Phi$, and by $U_\ell$ the first $\ell$ columns of $U$, we may now use the above bound on $\sigma_\ell (H^\dagger)$ along with \eqref{eq:lb}, to obtain
\begin{align}\label{eq:2err}
 2 \delta \sqrt {2m}& \geq \|H^\dagger \Phi (x-\hat x)\|_2 = \|V^T \Sigma^{-1} U\Phi (x-\hat x)\|_2\geq  \|\Sigma^{-1} U\Phi (x-\hat x))\|_2 \\
 &\geq  \|\Sigma^{-1}_\ell U_\ell \Phi_\ell (x-\hat x)\|_2 \geq \sigma_\ell(H^\dagger) \| U_\ell \Phi_\ell (x-\hat x)\|_2   \notag.
\end{align}
Setting $\widetilde{\Phi}:=\frac{1}{\sqrt l} U_\ell \Phi_\ell$, we thus have
\begin{equation}
2\delta\sqrt{2m}\geq  \sigma_\ell(H^\dagger) \sqrt \ell  \| \widetilde{\Phi}_\ell (x-\hat x)\|_2.\label{eq:up_bound}\end{equation}
Moreover, by Proposition \ref{pro: subG} we know that for any $\beta < 1/9$ there exist constants $c$ and $\rho$, (that depend on $\beta$) so that with probability exceeding $1-e^{-\rho \ell}$,  $\widetilde{\Phi}$ has the restricted isometry property with constant $\beta$, provided that $\ell\geq c k \log N/k$. So $\widetilde{\Phi}$ satisfies the hypotheses of Proposition \ref{pro:foucart}, which we can now apply with $p=2$ and with $\hat{x}$ and $x$ in place of $f$ and $g$. By first using the fact that $\|\hat{x}\|_1 \le \|x\|_1$ and then \eqref{eq:up_bound} and \eqref{eq:sig_l} we have
\begin{align*}
  \|x-\hat{x}\|_2 &\leq \Cr{c10} \|\widetilde{\Phi}(x-\hat x)\|_2+\Cr{c11}\frac{\sigma_k(x)_1}{\sqrt k} \\
&\leq 2 \sqrt 2 \Cr{c10} \delta\sqrt{\frac{m}{\ell}}  \frac{1}{\sigma_\ell(H^\dagger)} +\Cr{c11}\frac{\sigma_k(x)_1}{\sqrt k} \\
  & \leq 2\sqrt 2 \Cr{c10} \delta \sqrt
  {\frac{m}{\ell}}\left(\gt^2\left(\frac{3\pi
      r\ell}{m}\right)^{2r}+\left(\frac{\epsilon}{\delta}\right)^2
  \right)^{1/2}+\Cr{c11}\frac{\sigma_k(x)_1}{\sqrt k} \\ & \leq 2\sqrt 2
  \Cr{c10} \delta \sqrt
  {\frac{m}{\ell}}\left(\gt \left(\frac{3\pi r\ell}{m}\right)^{r}+\frac{\epsilon}{\delta}\right)+\Cr{c11}\frac{\sigma_k(x)_1}{\sqrt
    k} \\& \leq 2\sqrt 2 \Cr{c10} \gt 3^r\pi^r r^r
  \left(\frac{\ell}{m}\right)^{r-1/2}\delta +2\sqrt 2 \Cr{c10} \gt^2\sqrt
  {\frac{m}{\ell}}\epsilon+\Cr{c11}\frac{\sigma_k(x)_1}{\sqrt k}
\end{align*}
Setting $\Cr{c4}:= 2^{3/2}\Cr{c10}3^r \pi^r r^r\gt(r) $, $\Cr{c5}:=2\sqrt 2 \Cr{c10} \gt(r)$,  and $\Cr{c6}:=\Cr{c11}$, the error bound \eqref{eq:l2err} is established.
\end{proof}

\begin{rem} The bound \eqref{eq:l2err} shows that the reconstruction error due to  the perturbation $\eta$ scales linearly with $\|\eta\|_2$ (e.g. by setting $\ell=m$). Such perturbations (commonly referred to as measurement noise)  model a variety of phenomena, including circuit imperfections.  On the other hand, $\sd$ quantization is robust to certain circuit imperfections (see, e.g., \cite{daub-dev}, \cite{daubechies2006d}) of which we now provide an example. Suppose that we use a $1$-bit alphabet $\A = \{-\delta/2, \delta/2\}$ and an associated scalar quantizer $Q_\A$ to perform $1$st order $\sd$ quantization, and suppose further that $Q_\A$ is ``imperfect''. In particular, suppose that ${Q}_\A(z) = \sgn(z)$ whenever $|z|>\rho$ for some $\rho \in [0,1)$ but that $Q_\A(z)$ can be $\pm 1$ otherwise, for example at random. In this case, the stability constant (see Section \ref{basicsd}) associated with the scheme simply becomes $\delta/2 + \rho$ (see \cite{daub-dev, daubechies2006d}). As a consequence of our proof, $\delta+2\rho$ replaces $\delta$ in \eqref{eq:l2err} and the contribution of $\rho$ can be made arbitrarily small by taking more measurements. 

\end{rem} 

Below, we prove some important implications of Theorem \ref{thm:main}. Corollary \ref{cor:noise} distinguishes different regimes for the reconstruction error, depending on the relationship between $m$, $\epsilon$, and $\delta$. Corollary \ref{cor:exp2} shows that  choosing the optimal order of a coarse $\sd$ quantization scheme yields root-exponential error decay in the number of measurements.  Corollary \ref{cor:compr}, also dealing with the noise-free scenario, derives the error decay associated with $\sd$ quantization of compressible signals (as a function of $m$).


\begin{cor}\label{cor:noise}
 Let $\Phi$ be an $m\times N$ sub-Gaussian matrix with mean zero and unit variance, and let $k \in \{1,\dots,m\}$ with $m\geq 2c k \log{N}$. Denote by $Q_{\sd}^r$ a stable
  $r$th-order scheme with alphabet $\A_K^\delta$.  
  There exist positive constants $c$ and $\rho$ (as defined in Theorem~\ref{thm:main}) such that the following hold with probability exceeding $1-e^{-c\rho k\log N/k}$ on the draw of $\Phi$:
  
\bigskip
Let $x\in \R^N$ such that $\|\Phi x\|_\infty \le \mu < 1$. Suppose $q:=Q_{\sd}^r(\Phi x+\eta)$,  where the additive noise $\eta$ satisfies $\|\eta\|_\infty\le \epsilon$ for some $0\leq	\epsilon<1-\mu$.  Define $\ell_c:={m}{\left(\frac{\Cr{c6} \epsilon}{\Cr{c4}(2r-1)\delta}\right)^{1/r}}$ and
let $\hat{x}$ be the minimizer of \eqref{eq:decoder}. We distinguish three regimes based on $\ell_c$ and $m$.
\begin{itemize}
\item[]\noindent (i) If $\ell_c \leq  ck\log N$ (the low-noise scenario), then
\begin{equation}\label{eq:noise_a}
\|\hat{x}-x\|_2\leq \Cr{c4} \left(\frac{m}{2c k\log N/k}\right)^{-r+1/2} \delta+\Cr{c5}\fr{\sigma_k(x)}{\sqrt{k}}+\Cr{c6}\sqrt{ \frac{m}{c k\log N/k}} \epsilon.
\end{equation}
\item[]\noindent (ii) If \
 $c k\log N < \ell_c < m$ (the intermediate-noise scenario) then 
\begin{equation}\label{eq:noise}
\|\hat{x}-x\|_2\leq \Cl{c16} \delta^{\frac{1}{2r}} \epsilon^{1-\frac{1}{2r}}+\Cr{c5}\fr{\sigma_k(x)}{\sqrt{k}}.
\end{equation}
\item[]\noindent (iii) If $\ell_c \geq m$ (the high-noise scenario), then 
\begin{equation}\label{eq:noise_b}
\|\hat{x}-x\|_2\leq \Cr{c4} \delta+\Cr{c5}\fr{\sigma_k(x)}{\sqrt{k}}+\Cr{c6}\epsilon.
\end{equation}
\end{itemize}
Here $\Cr{c4}$, $\Cr{c5}$, $\Cr{c6}$ are the same as in Theorem \ref{thm:main}, and $\Cr{c16}$ depends on $r$, $\Cr{c4}$, and $\Cr{c6}$.
\end{cor}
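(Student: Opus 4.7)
The plan is to view the bound \eqref{eq:l2err} in Theorem~\ref{thm:main} as a one-parameter family indexed by $\ell\in [ck\log(N/k), m]$ and to optimize the right-hand side over $\ell$ in each of the three regimes. Only two of the three terms in \eqref{eq:l2err} depend on $\ell$: the ``quantization'' term $\Cr{c4}(m/\ell)^{-r+1/2}\delta$, which is decreasing in $\ell$, and the ``noise'' term $\Cr{c6}(m/\ell)^{1/2}\epsilon$, which is increasing in $\ell$. The sparsity term $\Cr{c5}\sigma_k(x)/\sqrt{k}$ is unaffected and will be carried through every case unchanged.

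First I would write $t := m/\ell$ and differentiate $f(t) := \Cr{c4}\delta\, t^{-r+1/2} + \Cr{c6}\epsilon\, t^{1/2}$ with respect to $t$. Setting $f'(t)=0$ yields
\[
t^r \;=\; \frac{\Cr{c4}(2r-1)\delta}{\Cr{c6}\epsilon},
\]
so the unconstrained minimizer of the $\ell$-dependent part of \eqref{eq:l2err} is exactly $\ell_c = m\big(\Cr{c6}\epsilon/(\Cr{c4}(2r-1)\delta)\big)^{1/r}$, as defined in the corollary. Since $f(t)$ is strictly convex on $(0,\infty)$, the optimal $\ell$ in the admissible window $[ck\log(N/k),m]$ is either $\ell_c$ (when it lies inside the window) or the nearest endpoint (otherwise).

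Next I would treat each regime by substituting the appropriate choice of $\ell$ into \eqref{eq:l2err} and simplifying. In the intermediate-noise regime $ck\log(N/k) < \ell_c < m$, we take $\ell = \ell_c$ and note that the two $\ell$-dependent terms at $\ell=\ell_c$ are, up to $r$-dependent constants, both equal to $\delta^{1/(2r)}\epsilon^{1-1/(2r)}$: indeed,
\[
(m/\ell_c)^{-r+1/2}\delta \;=\; \big(\Cr{c6}\epsilon/(\Cr{c4}(2r-1)\delta)\big)^{1-1/(2r)}\delta
\]
and analogously for the noise term, so their sum is $\Cr{c16}\,\delta^{1/(2r)}\epsilon^{1-1/(2r)}$ for a constant $\Cr{c16}$ depending only on $r$, $\Cr{c4}$, and $\Cr{c6}$, which produces \eqref{eq:noise}. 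In the low-noise regime $\ell_c \leq ck\log N$, convexity forces the optimum on $[ck\log(N/k),m]$ to be attained at the left endpoint, so we substitute $\ell = ck\log(N/k)$ (possibly absorbing a factor of two from the hypothesis $m\geq 2ck\log N$ into the first term) and read off \eqref{eq:noise_a} directly. In the high-noise regime $\ell_c \geq m$, the optimum is at the right endpoint, so we set $\ell = m$, collapsing $(m/\ell)^{-r+1/2}$ and $(m/\ell)^{1/2}$ both to $1$ and yielding \eqref{eq:noise_b}.

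I do not anticipate a substantive obstacle: once the critical $\ell_c$ is identified by one-variable calculus, the rest is substitution and algebraic bookkeeping in the three cases. The only mild care required is to verify that the convexity of $f(t)$ justifies choosing the nearest feasible endpoint in the boundary regimes, and to track the constants carefully so that the stated $\Cr{c4},\Cr{c5},\Cr{c6}$ from Theorem~\ref{thm:main} reappear correctly and $\Cr{c16}$ is defined explicitly in terms of them.
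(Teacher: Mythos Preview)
Your approach is essentially the paper's: optimize the right-hand side of \eqref{eq:l2err} over the free parameter $\ell$, identify $\ell_c$ as the unconstrained optimizer, and in each regime substitute either $\ell_c$ or the nearest admissible endpoint. Two small technical points deserve attention, though neither changes the strategy.

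First, your claim that $f(t)=\Cr{c4}\delta\,t^{-r+1/2}+\Cr{c6}\epsilon\,t^{1/2}$ is strictly convex on $(0,\infty)$ is false: the second summand is concave, and a direct computation shows $f''(t)<0$ for $t$ large. What is true, and what you actually need, is that $f$ is \emph{unimodal} (since $f'$ has a unique zero and changes sign from negative to positive there), so the minimum over a closed interval is attained at the critical point if it lies inside, and otherwise at the nearer endpoint. The paper sidesteps convexity entirely and instead uses the separate monotonicity of the two summands.

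Second, $\ell$ must be an integer, and neither $\ell_c$ nor $ck\log(N/k)$ need be one. The paper handles this explicitly: in the intermediate regime it picks an integer $\ell^*\in[\ell_c/2,\ell_c]$ and bounds $f(\ell^*)\le f_1(\ell_c)+f_2(\ell_c/2)$ via monotonicity (this is where the $r$-dependent constant $\Cr{c16}$ absorbs an extra factor); in the low-noise regime it bounds $f(\lceil ck\log(N/k)\rceil)\le f_1(2ck\log(N/k))+f_2(ck\log(N/k))$, which is exactly the source of the factor $2$ inside the first term of \eqref{eq:noise_a}. Your parenthetical about ``absorbing a factor of two'' is on the right track but should be made precise in this way.
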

\begin{proof} We have, from \eqref{eq:l2err} that
\begin{equation}\label{eq:noisy}
\|\hat{x}-x\|_2 \leq \Cr{c4}\left(\frac{m}{\ell}\right)^{-r+1/2}\delta+\Cr{c5}\fr{\sigma_k(x)}{\sqrt{k}}+\Cr{c6} \sqrt{\frac{m}{\ell}}\epsilon
\end{equation}
provided that $m\geq \ell\geq c k\log N$. 

The critical point of the right hand side in \eqref{eq:noisy}, viewed as a function of $\ell$, 
is a minimum, and given by $\ell_c:={m}{\left(\frac{\Cr{c4}(2r-1)\delta}{\Cr{c6} \epsilon}\right)^{-1/r}}$.
If  $c k\log N \leq \ell_c\leq m$ then we get \eqref{eq:noise} by the following argument.  Define $f(\ell):=f_1(\ell)+f_2(\ell)$ where $f_1(\ell):= \Cr{c4}\left(\frac{m}{\ell}\right)^{-r+1/2}\delta$ and $f_2(\ell):=\Cr{c6} \sqrt{\frac{m}{\ell}}\epsilon$. While we may not directly substitute $\ell_c$ into $f(\ell)$ as $\ell$ must be integer valued, we observe  that there is an integer $\ell^*$ between $\ell_c/2$ and $\ell_c$  with 
\begin{align*}f(\ell_c) &\leq f(\ell^*) = f_1(\ell^*)+f_2(\ell^*) \\ &\leq f_1(\ell_c) + f_2(\ell_c/2). \end{align*}
Otherwise if $\ell_c \notin [ c k\log N, m]$, the minimum of $f$ is achieved at one of the two boundary points  $\ell_1^*= {c k \log N}$ (corresponding to \eqref{eq:noise_a}) and $\ell_2^*=m$ (corresponding to \eqref{eq:noise_b}). In particular, $f(m)$ yields \eqref{eq:noise_b}, while $f(\lceil \ell_1^* \rceil) \leq f_1(2\ell_1^*)+ f(\ell_1^*)$ yields \eqref{eq:noise_a}.
\end{proof}

\begin{cor}[Root-exponential accuracy] \label{cor:exp2}
Let $Q_{\sd}^r$ be the $r$th-order coarse $\sd$ quantizer with stability constant $\gamma(r)=\Cr{c3}^r r^r\delta$ as described in Section~\ref{basicsd} and suppose that the hypotheses of Theorem~\ref{thm:main} hold. Let $x\in \R^N$ be $k$-sparse such that $\|\Phi x\|_\infty \le1$ and suppose $q:=Q_{\sd}^{r}(\Phi x)$ with $r=\lfloor\frac{\lambda}{2\Cl{c7}e}\rfloor^{1/2}$ and $\lambda:=\frac{m}{\lceil c k\log N/k \rceil}$. The solution $\hat{x}$ to \eqref{eq:decoder} satisfies
\begin{equation}\label{eq:rtexp}
\|\hat{x}-x\|_2 \leq  \Cl{c8}  e^{-\Cl{c9}\sqrt{\lambda}}\ \delta,
\end{equation}
with probability exceeding $1-e^{-c \rho k\log N/k}$. Here
$\Cr{c7}$ depends only on $\Cr{c3}$, defined after \eqref{eq:SDrec2}, $\Cr{c8}$, $\Cr{c9}$ do
not depend on $m,k,N$, and $c$,$\rho$ are as in Theorem
\ref{thm:main}. \end{cor}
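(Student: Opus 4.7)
The plan is to specialize Theorem~\ref{thm:main} to the noise-free, strictly sparse setting and then optimize the order $r$ of the coarse $\Sigma\Delta$ scheme against the redundancy $\lambda$. First I would apply Theorem~\ref{thm:main} with $\epsilon=0$, $\sigma_k(x)=0$, and the smallest admissible $\ell$, namely $\ell=\lceil ck\log(N/k)\rceil$, so that $m/\ell=\lambda$. The hypotheses are satisfied on the event of probability at least $1-e^{-c\rho k\log(N/k)}$ guaranteed by the theorem, and the bound \eqref{eq:l2err} collapses to
\begin{equation*}
\|\hat{x}-x\|_2 \;\leq\; \Cr{c4}\,\lambda^{-r+1/2}\,\delta.
\end{equation*}

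Next I would unpack the explicit constant $\Cr{c4}=2^{3/2}\Cr{c10}\,(3\pi)^{r}r^{r}\,\tilde{\gamma}(r)$ given in the proof of Theorem~\ref{thm:main}, using $\tilde{\gamma}(r)=\gamma(r)/\delta=\Cr{c3}^{r}r^{r}$ for the coarse scheme. Collecting the $r$-dependent factors yields
\begin{equation*}
\|\hat{x}-x\|_2 \;\leq\; 2^{3/2}\Cr{c10}\,\sqrt{\lambda}\,\left(\frac{\Cr{c7}\,r^{2}}{\lambda}\right)^{r}\delta,
\end{equation*}
for $\Cr{c7}:=3\pi\Cr{c3}$ (or any convenient upper bound on that constant). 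This is precisely the shape that invites root-exponential optimization, since the base of the $r$th power is small when $r\ll\sqrt{\lambda}$.

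Now I would substitute the prescribed choice $r=\lfloor\sqrt{\lambda/(2\Cr{c7}e)}\rfloor$. For this $r$ one has $\Cr{c7}r^{2}/\lambda\leq 1/(2e)$, so that
\begin{equation*}
\left(\frac{\Cr{c7}\,r^{2}}{\lambda}\right)^{r} \;\leq\; (2e)^{-r} \;\leq\; e^{-r} \;\leq\; e\cdot e^{-\sqrt{\lambda/(2\Cr{c7}e)}},
\end{equation*}
where in the last step I use $r\geq \sqrt{\lambda/(2\Cr{c7}e)}-1$. The remaining polynomial factor $\sqrt{\lambda}$ is absorbed into the exponential at the cost of slightly shrinking the exponent: $\sqrt{\lambda}\,e^{-\Cr{c9}'\sqrt{\lambda}}\leq \Cr{c8}\,e^{-\Cr{c9}\sqrt{\lambda}}$ for suitable constants. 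Combining the two displays gives \eqref{eq:rtexp}.

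The only real subtlety is bookkeeping: one must verify that the chosen $r$ is a positive integer (equivalently, that $\lambda$ is large enough for $r\geq 1$; in the small-$\lambda$ regime the bound becomes trivial and can be absorbed by enlarging $\Cr{c8}$), and one must track that the constants $c$ and $\rho$ inherited from Theorem~\ref{thm:main} depend only on the RIP parameter $\beta<1/9$ and not on $r$, so that the probability estimate is uniform in the chosen order. Everything else is a routine substitution.
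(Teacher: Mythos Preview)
Your proposal is correct and follows essentially the same route as the paper: specialize \eqref{eq:l2err} to the noise-free, $k$-sparse case with $\ell=\lceil ck\log(N/k)\rceil$, unpack $\Cr{c4}$ using $\tilde\gamma(r)=\Cr{c3}^r r^r$ to get a bound of the form $\sqrt{\lambda}\,(Cr^2/\lambda)^r\delta$, and then optimize in $r$. The only cosmetic difference is that the paper explicitly computes the minimizer $t^*=e^{-1}(3\pi\Cr{c3})^{-1/2}\sqrt{\lambda}$ via calculus (citing \cite{G-exp}) before setting $r=\lfloor t^*\rfloor$, whereas you plug in the prescribed $r$ from the statement and verify directly that the base is at most $(2e)^{-1}$; both arrive at the same root-exponential bound after absorbing the polynomial prefactor.
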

\begin{proof}[Proof of Corollary~\ref{cor:exp2}]
In the case of the coarse $r$th-order $\sd$ quantizer with stability constant $\gamma(r)=\Cr{c3}^r r^r\delta$ and when the signal is sparse and the measurements are noiseless, \eqref{eq:l2err}  reduces to
\begin{align}
\|\hat{x}-x\|_2 &\leq C_4 \left(\frac{m}{\ell}\right)^{-r+1/2} \delta, \notag \\
&=\Cl{c12} (3\pi \Cr{c3})^r r^{2r}\left(\frac{m}{\ell}\right)^{-r+1/2} \delta\label{eq:opt}
\end{align}
where $\Cr{c12}=2^{3/2} \Cr{c10}$. Since this inequality holds for any $\ell\geq c k\log N/k$, it holds for $\ell=\lceil c k\log N/k \rceil$ which yields
\begin{equation}\label{int1}
\|\hat{x}-x\|_2\leq \Cr{c12} (3\pi \Cr{c3})^r r^{2r}  \lambda ^{-r}\sqrt \lambda \ \delta.
\end{equation}
Next, note that for a fixed $\lambda$ (equivalently $m$) the right hand side of \eqref{int1} depends on the order of the $\sd$ scheme, $r$,  and can be optimized yielding
\begin{equation}
\|\hat{x}-x\|_2\leq \Cr{c12} \min_{r\in \N} (3\pi \Cr{c3})^r r^{2r}  \lambda ^{-r}\sqrt \lambda \ \delta.
\end{equation}
As explicitly shown in \cite{G-exp}, the critical point of the function $f(t)=(3\pi \Cr{c3})^t t^{2t}  \lambda ^{-t}\sqrt \lambda$ is given by 
$$t^*=e^{-1}(3\pi \Cr{c3})^{-1/2}\lambda^{1/2}.
$$
Accordingly, setting $r=\lfloor t^* \rfloor$ and simplifying, we obtain
\begin{align}
\|\hat{x}-x\|_2 &\leq \Cr{c12} e^2 e^{-2(3\pi \Cr{c3})^{-1/2} \sqrt{\lambda}} \sqrt{\lambda} \ \delta \notag\\
& \leq \frac{1}{2}\Cr{c12} e^2 (3\pi \Cr{c3})^{1/2} e^{-(3\pi \Cr{c3})^{-1/2} \sqrt{\lambda}} \delta   \notag.
\end{align}
 Setting $\Cr{c8}=\frac{1}{2}\Cr{c12} e^2 (3\pi \Cr{c3})^{1/2}$ and $\Cr{c9}=(3\pi \Cr{c3})^{-1/2}$ we arrive at \eqref{eq:rtexp}.

%
\end{proof}

Next, we consider the case when the signal is compressible and the measurements are noise-free. In this special case, \eqref {eq:l2err} reduces to
\begin{equation}
\|\hat{x}-x\|_2 \leq \Cr{c4} \left(\frac{m}{\ell}\right)^{-r+1/2}\delta+\Cr{c5} \frac{\sigma_k(x)}{\sqrt k}, 
\end{equation}
where $\ell$ and $k$ can be chosen freely as long as $m\ge \ell \ge ck\log N/k$. Below, we optimize this upper bound for compressible signals, modeled as elements in weak $\ell_p$ space.  Specifically, we say that $x$ is in the unit ball of the weak $\ell_p$ space $w\ell_p^N$ if  $|x_{(j)}| \leq
  j^{-1/p}$ where $x_{(j)}$ is the $j$th largest-in-magnitude entry of
  $x$.

%
%

\begin{cor}[Compressible signals]\label{cor:compr} 
  Let $\Phi$ be an $m\times N$
 sub-Gaussian matrix with mean zero and unit variance, and let $k$ and
  $\ell$ be in $\{1,\dots,m\}$. Denote by $Q_{\sd}^r$ a stable
  $r$th-order scheme with alphabet $\A_K^\delta$.  Fix $p\in(0,1)$ and set $\alpha:=\frac{r-1/2}{r+1/p-1}$.  
  There exist positive constants $c$ and $\rho$ (as defined in Theorem~\ref{thm:main}) such that whenever $m\ge \Cl{c13}c\log N$, the following holds with probability greater than $1-\exp(-c\rho m^\alpha\log^{1-\alpha} N)$ on the draw of $\Phi:$
\bigskip

For vectors $x\in \R^N$ in the unit ball of $w\ell_p^N$, let $q:=Q_{\sd}^r(\Phi x)$.  Then the solution $\hat{x}$ to \eqref{eq:decoder} with $\epsilon=0$ satisfies 
\begin{equation}\label{eq:l2err_compr}
\|\hat{x}-x\|_2\leq \Cl{c14} \delta^\frac{1/p-1/2}{r+1/p-1} \left(\frac{m}{(c+1) \log N}\right)^{-\frac{(1/p-1/2)(r-1/2)}{r+1/p-1}}.
\end{equation}
Here $\Cr{c13}$ and $\Cr{c14}$ depend on $r,t,\delta$ and are given explicitly below.
\end{cor}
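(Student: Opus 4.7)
The plan is to apply Theorem \ref{thm:main} with the free parameters $k$ and $\ell$ chosen to balance the two terms on the right hand side of \eqref{eq:l2err}, using the weak-$\ell_p$ structure to control the tail $\sigma_k(x)$.

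The first step is a routine tail estimate: for $x$ in the unit ball of $w\ell_p^N$ with $p\in(0,1)$, the decay $|x_{(j)}|\le j^{-1/p}$ gives
\[
\sigma_k(x)_1 \;=\; \sum_{j>k}|x_{(j)}| \;\le\; \sum_{j>k} j^{-1/p} \;\le\; \frac{p}{1-p}\, k^{1-1/p},
\]
so that $\sigma_k(x)/\sqrt k \lesssim k^{1/2-1/p}$. Substituting into \eqref{eq:l2err} with $\epsilon=0$ and choosing $\ell=\lceil ck\log(N/k)\rceil$ (the smallest value permitted by the hypothesis) yields, up to constants,
\[
\|\hat x - x\|_2 \;\lesssim\; \delta\left(\frac{k\log(N/k)}{m}\right)^{r-1/2} + k^{-(1/p-1/2)}.
\]

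Next I would optimize over $k$. Setting $a:=r-1/2$ and $b:=1/p-1/2$, the two terms balance when $k^{a+b}\asymp \delta^{-1}(m/\log(N/k))^{a}$. Treating $\log(N/k)\asymp \log N$ (which is justified a posteriori because the resulting $k$ will be a polynomial in $m/\log N$ with exponent strictly less than $1$), the optimal choice is
\[
k \;\asymp\; \delta^{-1/(a+b)}\Bigl(\tfrac{m}{\log N}\Bigr)^{a/(a+b)} \;=\; \delta^{-1/(r+1/p-1)}\Bigl(\tfrac{m}{\log N}\Bigr)^{\alpha},
\]
whence both terms are $\asymp \delta^{b/(a+b)}(\log N/m)^{ab/(a+b)}$, which simplifies to the bound in \eqref{eq:l2err_compr}. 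Since $k$ must be an integer, I would take $k:=\lceil \cdot \rceil$ and absorb the rounding into the constant $\Cr{c14}$; the hypothesis $m\ge \Cr{c13} c\log N$ is used to guarantee that this $k$ satisfies $k\le m/(c\log(N/k))$ so that the choice $\ell=\lceil ck\log(N/k)\rceil\le m$ is admissible.

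Finally I would translate the probability estimate. Theorem \ref{thm:main} succeeds with probability at least $1-\exp(-c\rho k\log(N/k))$. Plugging in the optimized $k$ gives $k\log(N/k)\asymp (m/\log N)^\alpha\cdot \log N = m^\alpha \log^{1-\alpha} N$, matching the exponent in the stated probability. The main obstacle, such as it is, is the bookkeeping around the $\log(N/k)$ factor: one must verify that the self-referential choice of $k$ is consistent (i.e.\ $\log(N/k)\asymp \log N$), which follows because $k\le m^\alpha\cdot (\log N)^{-\alpha}\le N^{1-\eta}$ for some $\eta>0$ once $m\le N^{1-\eta/\alpha}$, or by absorbing any remaining logarithmic factor into $\Cr{c14}$ by adjusting $c$. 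The rest is algebraic simplification.
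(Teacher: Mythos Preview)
Your proposal is correct and follows essentially the same approach as the paper's proof: bound the $\ell_1$ tail via the weak-$\ell_p$ decay, substitute into Theorem~\ref{thm:main} with $\epsilon=0$ and the minimal admissible $\ell=\lceil ck\log(N/k)\rceil$, relax $\log(N/k)$ to $\log N$ (one-sided, which is all that is needed for the upper bound), and optimize over $k$. The paper handles the integer rounding via the observation that there is an integer $k^*\in[k_c/2,k_c]$ with $f(k^*)\le f_1(k_c)+f_2(k_c/2)$, which is slightly more explicit than your ``absorb into the constant'' but amounts to the same thing.
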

\begin{proof} It will be notationally less cumbersome to work with $t=1/p$. For any $x$ in the unit weak $\ell_{1/t}$ ball, its tail (measured in the $\ell_1$ norm) decays as
\[
\sigma_k(x) \leq \int\limits_{k+1}^N s^{-t}  \mathrm{d}s \leq \frac{k^{1-t}}{t-1}.
\]
Inserting this bound and $\epsilon=0$ into \eqref{eq:l2err}, we get
\[
\|\hat{x}-x\|_2 \leq \Cr{c4} \left(\fr{m}{\ell}\right)^{-r+1/2}\delta+\frac{\Cr{c5}}{t-1}  k^{-t+1/2} , 
\]
Note that for any fixed $k$, we can make the upper bound above smallest by choosing the smallest $\ell$ for which \eqref{eq:l2err} holds, i.e., $\ell=\lceil c k \log N/k \rceil$. With this choice, we now have 
\begin{align} \label{eq:compr}
\|\hat{x}-x\|_2 &\leq \Cr{c4}\left(\fr{m}{\lceil c \ k \log N/k\rceil}\right)^{-r+1/2}\delta+\frac{\Cr{c5}}{t-1}  k^{-t+1/2} \notag \\& \leq \Cr{c4}\left(\fr{m}{c' \ k \log N}\right)^{-r+1/2}\delta+\frac{\Cr{c5}}{t-1}  k^{-t+1/2}
\end{align}
Above, replacing $c$ by $c':=c+1$ allows us to remove the ceiling function, and we relaxed $\log N/k$ to $\log N$ to make optimizing over $k$ easier in the next step. 
To simplify the ensuing argument define $f_1(k):=\Cl{c4p}\left(\fr{m}{c'\ k \log N}\right)^{-r+1/2}\delta$ and $f_2(k):=\frac{\Cl{c5p}}{t-1}  k^{-t+1/2}$ where $\Cr{c4p}:=\max\limits_{k \in [1, m/(c\log N)]} C_4(k)$ and $\Cr{c5p}:=\max\limits_{k \in [1, m/(c\log N)]} C_5(k)$.
The critical point of $f(k):=f_1(k)+f_2(k)$, the right hand side in \eqref{eq:compr}, is
\[
k_c=\left(\frac{\Cl{c15}}{\delta} \left(\frac{m}{c'\ \log N}\right)^{r-1/2} \right)^{\frac{1}{r+t-1}}
\]
where $\Cr{c15}=\frac{\Cr{c5p} (t-1/2)}{\Cr{c4p} (t-1)(r-1/2)}$. In particular, this critical point is readily seen to be a minimum. However, for  \eqref{eq:compr} to hold, $k$ must be an integer satisfying $1\leq k\leq \frac{m}{c \log N}$ so we can not directly substitute $k_c$ in \eqref{eq:compr}. Nevertheless, we note that $k_c$ is a minimum, $f_1(k)$ is increasing with $k$, and $f_2(k)$ is decreasing with $k$. Moreover, if $k_c\geq 1$ there exists an integer $k^*$ between $k_c$ and $k_c/2$ with \begin{align*}f(k_c) &\leq f(k^*) = f_1(k^*)+f_2(k^*) \\ &\leq f_1(k_c) + f_2(k_c/2).\end{align*} Above, the first inequality is by the optimality of $k_c$ and the second by the monotonicity of $f_1$ and $f_2$. 
Thus, if $1\leq k_c\leq \frac{m}{c' \log N}$, or equivalently if $$m\geq \max\left\{c' \log N \left(\Cr{c15}/\delta\right)^{\frac{1}{t-1/2}},  c' \log N \left(\delta/\Cr{c15}\right)^{\frac{1}{r-1/2}}\right\}$$ this yields 
\[
\|\hat{x}-x\|_2\leq \Cr{c14} \delta^{\frac{t-1/2}{r+t-1}} \left(\frac{m}{c' \log N}\right)^{-\frac{(t-1/2)(r-1/2)}{t+r-1}},
\]
where $\Cr{c14}= \Cr{c4}\Cr{c15}^\fr{r-1/2}{r+t-1} + \fr{2^{t-1/2}}{t-1}\Cr{c5}\Cr{c15}^\fr{-t+1/2}{r+t-1}$.
By Theorem \ref{thm:main}, the above bound holds with probability exceeding $1-e^{-\rho c k_c\log N}=1-\exp\left(-\rho c \ m^{\fr{r-1/2}{r+t-1}} \ \log^{\fr{t-1/2}{r+t-1}} N\right)$. Setting $\Cr{c13}=\max\left\{\left(\Cr{c15}/\delta\right)^{\frac{1}{t-1/2}},  \left(\delta/\Cr{c15}\right)^{\frac{1}{r-1/2}}\right\}$, $\alpha=\frac{r-1/2}{r+t-1}$, we obtain \eqref{eq:l2err_compr}.

\end{proof}

\begin{rem} The results above place no restrictions on how large $m$ can grow, compared to $N$. In particular,  our results apply to both the ``undersampling" $(m\leq N)$ and ``oversampling" $m\geq N$ cases. However,  they are not necessarily optimal for the $m \geq N$ case. In this setting,  the Sobolev dual decoder proposed in \cite{blum:sdf}, when applied to $\sd$ quantized sub-Gaussian measurements of compressible signals yields an error decay of order $\left(\frac{m}{N}\right)^{\alpha(r-1/2)}$, where $\alpha\in (0,1)$ controls  the probability \cite{GLPSY13, KSY13}. This error is better than predicted by Corollary \ref{cor:compr}. Below, we show that our one-stage decoder also yields this quantization error decay in the oversampled case, while maintaining robustness to noise.

\begin{cor} Assume all the variables are as defined in Theorem \ref{thm:main} with $m\geq \ell \geq dN$, where $d\geq 18$ is a constant.  Then there exists a constant $c$ such that decoding via \eqref{eq:decoder} yields the uniform error bound 
\[
\|\hat{x}-x\|_2 \leq \Cr{c4} \left(\frac{m}{\ell}\right)^{-r+1/2}\delta+\Cr{c6}\sqrt{\frac{m}{\ell}}\epsilon , 
\]
with probability $1-e^{-c N}$. 
\end{cor}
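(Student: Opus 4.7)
The plan is to mimic the proof of Theorem \ref{thm:main} through equation \eqref{eq:2err}, but then to replace the restricted-isometry step (which invoked Proposition \ref{pro:foucart} and introduced the sparsity term $\sigma_k(x)/\sqrt{k}$) by a direct minimum-singular-value argument, which is available precisely because we are now in the oversampled regime $\ell \geq dN$. The earlier part of the proof already delivers
$$\sigma_\ell(H^\dagger)\,\|U_\ell \Phi_\ell(x-\hat x)\|_2 \leq 2\delta\sqrt{2m},$$
and since $U_\ell$ has orthonormal columns, $\|U_\ell \Phi_\ell(x-\hat x)\|_2 = \|\Phi_\ell(x-\hat x)\|_2$, where $\Phi_\ell$ denotes the first $\ell$ rows of $\Phi$.

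Because $\ell \geq dN$ with $d\geq 18$, the tall sub-Gaussian matrix $\Phi_\ell \in \R^{\ell\times N}$ falls within the scope of Proposition \ref{pro:singular_sub}. Taking $t=\sqrt{N}$ in that proposition yields
$$\sigma_{\min}(\Phi_\ell)\ \geq\ \sqrt{\ell} - 2\sqrt{N}\ \geq\ \bigl(1-2/\sqrt{d}\bigr)\sqrt{\ell}\ \geq\ c'\sqrt{\ell}$$
with probability at least $1 - e^{-dN/2}$, for some absolute constant $c'>0$ (for $d\geq 18$). On this event, and for every $v\in\R^N$ without any sparsity hypothesis, one has $\|v\|_2 \leq \|\Phi_\ell v\|_2/(c'\sqrt{\ell})$. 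Applying this with $v = x-\hat x$ and combining with the displayed inequality above gives
$$\|x-\hat x\|_2\ \leq\ \frac{2\sqrt{2}\,\delta}{c'}\,\sqrt{\frac{m}{\ell}}\,\frac{1}{\sigma_\ell(H^\dagger)}.$$

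To finish, I would substitute the bound on $\sigma_\ell(H^\dagger)^{-1}$ already established in the proof of Theorem \ref{thm:main} using Proposition \ref{pro:singular}, namely $1/\sigma_\ell(H^\dagger) \leq \gt(3\pi r\ell/m)^r + \epsilon/\delta$. This splits cleanly into two terms of exactly the same form as the $\Cr{c4}(m/\ell)^{-r+1/2}\delta$ and $\Cr{c6}\sqrt{m/\ell}\,\epsilon$ contributions in \eqref{eq:l2err}, and the $\sigma_k(x)/\sqrt{k}$ term simply disappears because no sparsity was used. The failure probability is inherited entirely from the single invocation of Proposition \ref{pro:singular_sub} and is of the desired form $e^{-cN}$. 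The only mildly delicate point---the \emph{main obstacle}, such as it is---is calibrating the constant $d$ (or the choice of $t$) so that $\sqrt{\ell}-2\sqrt{N}$ is a fixed positive fraction of $\sqrt{\ell}$; the value $d\geq 18$ suffices for the clean bound above, and all remaining steps are a direct reuse of calculations already performed in the proof of Theorem \ref{thm:main}.
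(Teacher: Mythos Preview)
Your proposal is correct and follows essentially the same route as the paper: both rerun the proof of Theorem~\ref{thm:main} through \eqref{eq:up_bound} and then replace the appeal to Proposition~\ref{pro: subG} by Proposition~\ref{pro:singular_sub}, which is applicable in the oversampled regime $\ell\ge dN$. The only cosmetic difference is that the paper phrases the conclusion of Proposition~\ref{pro:singular_sub} as ``$\widetilde\Phi$ has the RIP with sparsity $k=N$'' and then invokes Proposition~\ref{pro:foucart} with $\sigma_N(x)=0$, whereas you use the minimum-singular-value bound directly to control $\|x-\hat x\|_2$; these are equivalent, since RIP at level $k=N$ is nothing but a two-sided singular-value bound.
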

\begin{proof} One can repeat the proof of Theorem  \ref{thm:main}, with minor modifications, to obtain the result. Specifically, applying Proposition \ref{pro:singular_sub} with $t=\sqrt{\frac{N}{\ell}}$ we see that with probability exceeding $1-e^{-\Cl{c19} N}$, the matrix $ \widetilde \Phi$ (as defined in the proof of Theorem \ref{thm:main}) satisfies the restricted isometry property with constant $\delta =\frac{2N}{m}\leq \frac{2}{d} \leq 1/9$, and sparsity $k=N$. Everything else remains unchanged and yields the desired result by setting $\sigma_k(x)=0$.
 \end{proof}

\section{Numerical Experiments}
In this section, we present numerical experiments to illustrate our theoretical results. In all the simulations below, we use random Gaussian matrices as sensing matrices.  
\subsection*{Sparse signals}
In the first experiment, we compare the empirical decay rate of the
reconstruction error due to solving \eqref{eq:decoder}, with those
predicted in Theorem \ref{thm:main}. We generate a Gaussian matrix
$\Phi_0$ of size $1000 \times 512$. We then generate $m\times 512$
mesurement matrices for various values of $m$ between 100 and 1000 by
selecting the first $m$ rows of $\Phi_0$. We also generate a set of
100 $k$-sparse vectors, $k=10$, where the support is chosen uniformly
at random and the magnitudes are such that the restriction of $x$ to
its support is uniformly distributed in the unit-ball of $\R^k$. We
set the quantization step size $\delta= 0.01$ and use $r$th-order
$\sd$ quantization, $r=1,2$, with the corresponding greedy
quantization rule to quantize the $m$ compressed sensing measurements
of each of the 100 sparse vectors. Figure \ref{fig:sparse} shows the
worst-case reconstruction error over the 100 experiments described
above as a function of the number of measurements $m$. Note that the
behavior of the error is as expected.  
\begin{figure}[h!]
\begin{centering}
\includegraphics[scale=0.4]{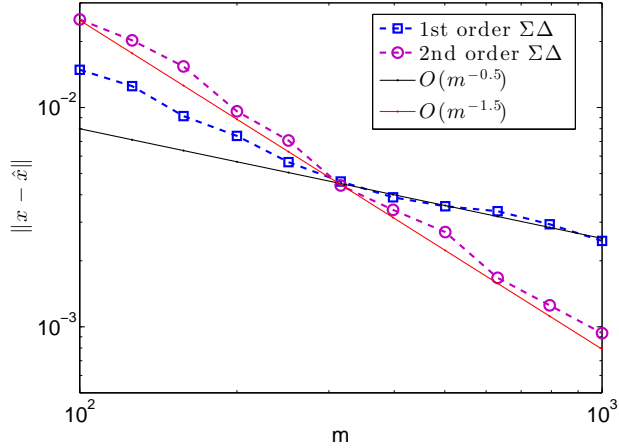}
\caption{Worst case reconstruction error of \eqref{eq:decoder} on a log-log plot, with parameters: \mbox{$N=512$}; $m=\lfloor10^{2+0.1l}\rfloor$ for $l=0,1,..,10$; $k=10$; $\delta=0.01$. 
}
\label{fig:sparse}
\end{centering}
\end{figure}

\subsection*{One-bit quantization} 
Next, we repeat this experiment (albeit with $N=256$, $k=5$)
for one-bit $\sd$ quantization of compressed sensing measurements, using
\eqref{eq:SDrec2} with the one-bit alphabet $\{\pm1\}$ as the
quantizer and \eqref{eq:decoder2} as the decoder. To ensure stability
of the quantizer for $r=1,2$, we consider sparse signals lying inside
the $\ell_2$-ball with radius 0.15 (alternatively, we could scale the
alphabet and work with vectors in the unit ball).  Figure
\ref{fig:onebit} plots the corresponding result with each point
representing the worst case error over 100 experiments. The predicted
rates, $m^{-0.5}$ and $m^{-1.5}$, appear in the Figure
\ref{fig:onebit} as $m$ gets large. The initial plateaus of both
curves correspond to the range of $m$ for which the zero-vector is
feasible and thus optimal. As $m$ increases to a point where zero is no longer
feasible to \eqref{eq:decoder2} the predicted decay rate appears.
\begin{figure}[h!]
\begin{centering}
\includegraphics[scale=0.4]{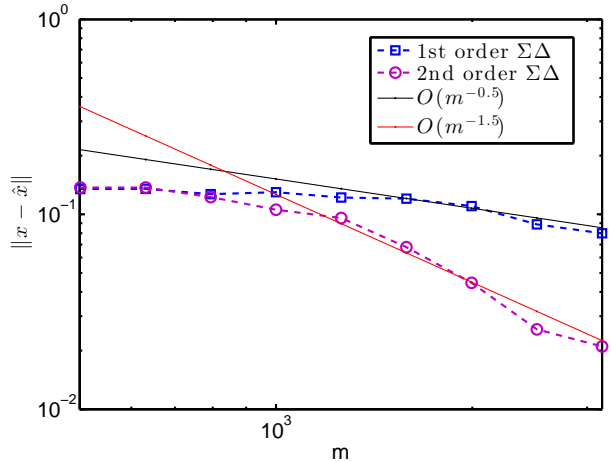}
\caption{Log-log plot for the worst case reconstruction error of one-bit quantization with parameters $N=256, m=\lfloor 10^{2+0.1l} \rfloor, l=7,...,15$, and $k=5$.
\label{fig:onebit}
}
\label{fig:}
\end{centering}
\end{figure}

 \subsection*{Compressible signals}
In the next experiment, we consider compressible signals. Our measurement setup is
identical to that of the first experiment. Specifically, we use
restrictions of a $1000\times 512$ Gaussian matrix
$\Phi_0$ to generate $m\times 512$ measurement matrices for
various values of $m$ between 100 and 1000. We generate 100 compressible signals in the unit $w\ell_p^N$ ball with $p=0.5$, so $|x_{(i)}| \leq \frac{1}{i^{2}}$ where $x_{(i)}$ denotes the $i^{\text{th}}$ largest component of $x$ in magnitude. 
Specifically, each compressible signal is generated by randomly permuting the indices of a vector whose $i$'th element is drawn from the uniform distribution on the interval $[-1/i^2,1/i^2]$.    Figure \ref{fig:compressible} compares the decay rate (as a function of $m$) of the worst case error over the 100 signals with those predicted in Corollary \ref{cor:compr}, i.e., 
 $m^{-0.375}$ and $m^{-0.75}$ for $r=1,2$, respectively.  
 
 \begin{figure}[h!]
\begin{centering}
\includegraphics[scale=0.4]{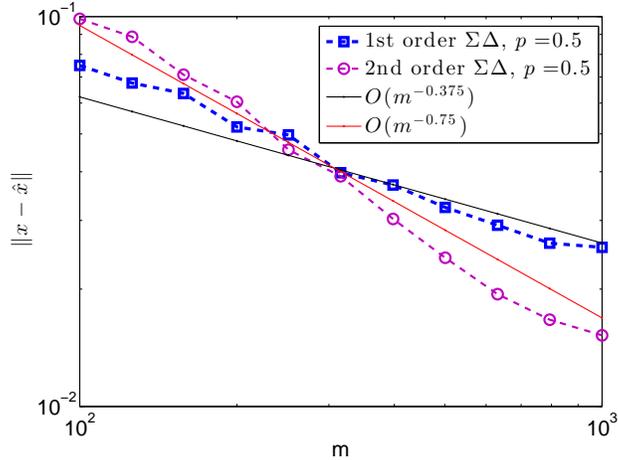}
\caption{Log-log plot for the worst case reconstruction error of compressible signals in the unit $w \ell_p^N$ ball with $p=1/2$, $N=512, m=\lfloor 10^{2+0.1l} \rfloor, l=0,1,...,10$, and $k=10$.
}
\label{fig:compressible}
\end{centering}
\end{figure}
 
\subsection*{Robustness to noise}
This experiments illustrates our decoder's robustness to noise. We use the same set of parameters as in the first experiment,  except that now the measurements are damped by i.i.d., uniform noise $\eta$ with $\|\eta\|_{\infty} \leq 2\cdot 10^{-3}$. Figure \ref{fig:noisy}  provides evidence that as $m$ grows, the limit of the reconstruction error approaches a fixed value as predicted in \eqref{eq:noise} and \eqref{eq:noise_b} of Corollary \ref{cor:noise}. 
\begin{figure}[h!t!]
\begin{centering}
\includegraphics[scale=0.4]{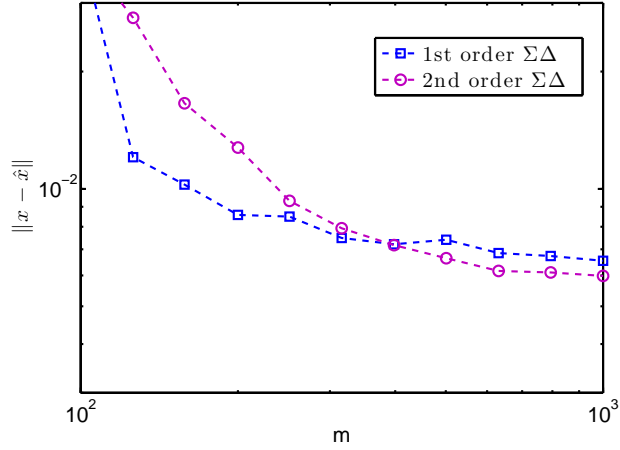}
\caption{Worst case reconstruction error with noisy measurements setting $\epsilon=2\cdot 10^{-3}$, $N=512, m=\lfloor 10^{2+0.1l} \rfloor, l=0,1,...,10$, and $k=10$.}
\label{fig:noisy}
\end{centering}
\end{figure}

\subsection*{Root-exponential error-decay}
Our final experiment shows the root exponential decay predicted in Corolloary \ref{cor:exp2}. Here, for the compressed sensing measurements of each generated signal, we apply $\sd$ quantization with different orders (from one to five) and select the one with the smallest reconstruction error. For this experiment, the setup is again similar to our first experiment, with $m$ ranging from $100$ to $398$ while  $N=512$, $k= 5$, and $\delta=0.01$. Since the sparsity level  is fixed, the oversampling factor $\lambda$ is propositional to $m$ and we expect the reconstruction error to decay root exponentially in $m$,
$$\|\hat{x}_{opt}-x\| \leq e^{-c \sqrt m},$$ 
or equivalently
$$\log(-\log(\|\hat{x}_{opt}-x\|)) \geq \frac{1}{2} \log m+\log c.$$ 
Thus, in Figure \ref{fig:root_exp}, we plot  (as a function of $m$), the $\log(-\log(\cdot))$ of the mean reconstruction error over 50 experiments as well as $\frac{1}{2} \log m+\log c$.

\begin{figure}[h!]
\begin{centering}
\includegraphics[scale=0.4]{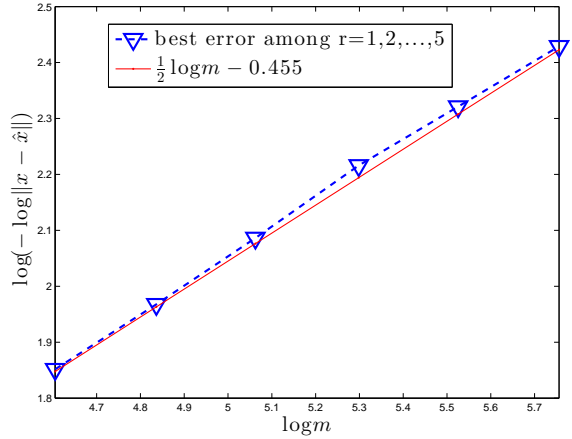}
\caption{Root exponential decay, with $N=512$; $k=5$; $N=512$, and $m=\lfloor 10^{2+0.1l} \rfloor, l=0,1,...,10$ .
}
\label{fig:root_exp}
\end{centering}
\end{figure}
\section{Acknowledgements} 
R.~Wang was funded in part by an NSERC
Collaborative Research and Development Grant DNOISE II (22R07504). {\"O}.~Y{\i}lmaz was funded in part by a Natural Sciences and
Engineering Research Council of Canada (NSERC) Discovery Grant
(22R82411), an NSERC Accelerator Award (22R68054) and an NSERC
Collaborative Research and Development Grant DNOISE II (22R07504).

\bibliographystyle{plain}
\bibliography{DCCbib,nsfrefs,refs}
\end{document}